\documentclass[a4paper,12pt]{article}
\usepackage{amssymb,amsmath,amsthm}
\usepackage{texdraw}
\usepackage{a4wide}
\usepackage{subcaption}
\usepackage{pgf,tikz}
\usetikzlibrary{arrows}
\usepackage{graphics}
\usepackage{graphicx}
\usepackage{epsfig}

\newtheorem{theorem}{Theorem}[section]

\newtheorem{proposition}[theorem]{Proposition}
\newtheorem{lemma}[theorem]{Lemma}
\newtheorem{corollary}[theorem]{Corollary}


\theoremstyle{definition}

\newtheorem{example}[theorem]{Example}
\newtheorem{remark}[theorem]{Remark}


\begin{document}


\title{ A new class of integrable Lotka--Volterra systems}
\author{H. Christodoulidi$^{1,2}$, A.N.W. Hone$^{2,3}$  and T.E. Kouloukas$^{2}$\\  
\ \\ 
$^{1}$ Research Center for Astronomy and Applied Mathematics  
\\ Academy of Athens, Athens 11527, Greece\\
$^{2}$School of Mathematics, Statistics and Actuarial Science, \\
  University of Kent, Canterbury CT2 7NF, UK\\
$^{3}$ School of Mathematics and Statistics, \\
University of New South Wales, Sydney NSW 2052, Australia}

\maketitle

\begin{abstract}
 A parameter-dependent class of Hamiltonian (generalized) Lotka--Volterra systems is considered. We prove that this class contains Liouville integrable as well as superintegrable cases according to particular choices of the parameters.  We determine  sufficient conditions which
result in integrable behavior, while we numerically explore the complementary cases, 
where these analytically derived conditions are not satisfied.  
\end{abstract}

\section{Introduction}

 {The} Lotka--Volterra system  {was} introduced independently 
by Lotka \cite{lotka} and Volterra  \cite{volterra} as a predator-prey model. 
Since then, many generalizations have been considered with applications to several scientific disciplines.
 {These systems in general display rich dynamical behavior that varies according to the 
parameters that define each one of them.} For example, there are Hamiltonian and 
non-Hamiltonian Lotka--Volterra systems, as well as integrable, 
non-integrable and chaotic ones. 
From the point of view of integrability, various kinds 
of generalized Lotka--Volterra systems have been extensively studied  in the literature, e.g. 
 \cite{balle,bountis,chara,DEKV,hern,itoh87,itoh09,ragni,suris99,veselov}. 
 A  numerical study of a $4$--dimensional non-integrable 
Lotka--Volterra system can be found in \cite{vano}.

In this paper, we study a parametric family of (generalized) Lotka--Volterra systems of the form
\begin{equation}\label{system}
  \dot x_i=x_i\left(\sum_{j>i}a_jx_{j}-\sum_{j<i}a_jx_{j}+r_i\right)\;, \ \ a_i,r_i \in \mathbb{R}. 
\end{equation}
This family includes some particular interesting cases. The case of $r_i=0$ and $a_i=1$   {came up in the study of a class of multi-sums of products in [13] which is related to integrals of periodic reductions of discrete integrable systems.} It  
can be considered as a finite dimensional reduction 
of a  {Bogoyavlenskij} lattice \cite{bog87,bog} with fixed boundary conditions. The integrability of this case 
and its corresponding Kahan discretization has been studied in detail in \cite{KKQTV}. 
In \cite{KQV}, the Liouville integrability and superintegrability of the more general cases,  with $r_i=0$ and arbitrary $a_i \in \mathbb{R}$, was proved and explicit solutions were given for the corresponding continuous and discrete systems.  
 {Motivated by these results, we aim to study} the integrable and dynamical aspects of \eqref{system}, with arbitrary parameters $a_i$ and $r_i$ in $\mathbb{R}$.  

 {As is shown in Section 3, all the even-dimensional cases of  \eqref{system} are Hamiltonian with respect to a log-canonical Poisson bracket and this also applies to odd dimensions under some extra conditions on the parameters $r_i$.} A first approach to trace integrable cases is the following. We consider the integrals of the $r_i=0$ case as they appear in \cite{KQV}, and we  {demand them} to be in involution with the Hamiltonian function of \eqref{system}. This restriction leads to a system for the parameters $a_i$ and $r_i$. Solutions of this system provide necessary and sufficient conditions 
which ensure the pairwise involutivity of all the integrals (including the Hamiltonian). 
This procedure provides  {several} Liouville integrable cases. By considering a permutation symmetry of the system more integrable cases appear as well as superintegrable cases according to particular choices of the parameters. These results appear  {in} Sections $4$--$5$.

In Section $6$, we numerically 
explore the behavior of \eqref{system} with $n=4$ for the cases where integrability is not proven 
by the analytical arguments of the previous sections. To this end we perform a series of numerical simulations
for various different parameters which determine the system \eqref{system}.
Integrability or non-integrability is manifested by the  Poincar{\'e} surfaces of 
section as well as the evolution of the largest Lyapunov  
exponent for various initial conditions at gradually increasing energies. We have strong indications that 
more integrable cases exist, however, we find non-integrable cases as well. 
Notable non-integrable examples are found for the $4$-dimensional Lotka--Volterra  
system \eqref{system} with bounded trajectories in phase space, whose orbits demonstrate a particularly rich complexity.

\section{A class of Lotka--Volterra systems}
{\it Generalized Lotka--Volterra} or just {\it Lotka--Volterra systems} are systems of the form 
\begin{equation} \label{system1}
  \dot{x_i}=x_i \left(\sum_{j=1}^n A_{ij} x_j+r_i \right)\;, \qquad i=1,\dots,n\;,
\end{equation}
where $A=(A_{ij})$ is any arbitrary $n \times n$ matrix, known as the {\it community matrix} and $\mathbf{r}=(r_1,\ldots,r_n)$ is a vector in $\mathbb{R}^n$.

In this paper, we are going to study a particular class of Lotka--Volterra systems, with community matrix 
\begin{equation}\label{Amatrix}
  A=\begin{pmatrix}
  0 & a_2& a_3& \dots&a_n \\
  -a_1 & 0& a_3& \dots&a_n \\
  -a_1 & -a_2& 0& \dots&a_n \\
  \vdots & \vdots & \vdots & \ddots \\
  -a_1 &-a_2&-a_3& \dots&0
\end{pmatrix}\;, 
\end{equation}
and parameters $a_1, \ldots, a_n \in \mathbb{R}$. In this case, system \eqref{system1} can be written as \eqref{system},  
or equivalently, as 
\begin{equation} \label{system2}
  \dot{x_i}=x_i \left(\sum_{j=1}^n P_{ij} a_j x_j+r_i \right)\;,
\end{equation}
where $P$ is the antisymmetric matrix 

\begin{equation} \label{prtder}
P_{ij}= \left\{
    \begin{array}{cc}
        1-\delta_{ij}  & \mbox{for } 1 \leq i \leq j \leq n\;,  \\
          -1  & \mbox{for } 1 \leq j<i \leq n\;.
    \end{array}
    \right.
\end{equation}

The special case of \eqref{system}  
with $\mathbf{r}=0$  {was} extensively studied in \cite{KKQTV, KQV}, 
where the Liouville and superintegrability of the 
corresponding systems were proved and explicit solutions  {were} given. Here, our aim is to investigate the integrability of particular cases with $\mathbf{r}\neq 0$. In due course we mainly restrict our attention to the case that 
$n$ is even.

\section{Hamiltonian formalism} \label{secHamForm}
We consider the log-canonical Poisson structure 
\begin{equation}\label{poisson}
\{x_i,x_j\}=x_ix_j\;,\quad 1\leq i<j \leq n.
\end{equation}%
The rank of this Poisson structure, for $x_1\ldots x_n \neq0$, is $n$ for even $n$, and $n-1$ for odd $n$. In the odd case, 
 $ C:=\frac{x_1 x_3 \dots x_{n}}{x_2 x_4 \dots x_{n-1}}\;$ is a Casimir function. 

\begin{proposition} \label{prop1}
For any even $n$, $a_i, r_i \in \mathbb{R}$ and $x_i>0$, $i=1,2,\ldots , n$, the Lotka--Volterra system 
\eqref{system} is Hamiltonian with respect to the Poisson structure \eqref{poisson} and the Hamiltonian function 
$$H(\mathbf{x})=\sum_{i=1}^n (a_i x_i+k_i\log x_i),$$
where $\mathbf{x}=(x_1,\dots,x_n)$ and $\mathbf{k}=(k_1,\ldots k_n)$ defined by $\mathbf{k}=P^{-1}\mathbf{r}$. 
\end{proposition}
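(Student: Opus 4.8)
The plan is to verify the claim by a direct computation of Hamilton's equations, so the first step is to record the structure matrix of the bracket in a form that matches \eqref{prtder}. Reading off \eqref{poisson}, the log-canonical bracket is $\{x_i,x_j\}=P_{ij}\,x_ix_j$, where $P$ is precisely the antisymmetric sign matrix of \eqref{prtder}: indeed $P_{ij}=1$ for $i<j$, $P_{ij}=-1$ for $i>j$, and $P_{ii}=0$, which reproduces $\{x_i,x_j\}=x_ix_j$ for $i<j$ and its antisymmetric continuation. I would note in passing that \eqref{poisson} is a genuine Poisson bracket (the Jacobi identity holds automatically for any log-canonical bracket with constant exponents), so that Hamiltonian mechanics applies; the positivity assumption $x_i>0$ is what makes the $\log x_i$ terms in $H$ well defined.

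Next I would compute the Hamiltonian vector field. Since $\partial H/\partial x_j=a_j+k_j/x_j$, applying the Leibniz rule gives
\begin{equation*}
\dot{x_i}=\{x_i,H\}=\sum_{j=1}^n\{x_i,x_j\}\frac{\partial H}{\partial x_j}
=\sum_{j=1}^n P_{ij}\,x_ix_j\Bigl(a_j+\frac{k_j}{x_j}\Bigr)
=x_i\sum_{j=1}^n P_{ij}\bigl(a_jx_j+k_j\bigr).
\end{equation*}
The first group of terms, $x_i\sum_j P_{ij}a_jx_j$, is exactly the right-hand side of the target system in the form \eqref{system2}, so the entire content of the proposition reduces to matching the remaining linear term.

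Comparing with \eqref{system2} term by term, the computed field equals the desired one precisely when $\sum_{j}P_{ij}k_j=r_i$ for every $i$, that is, $P\mathbf{k}=\mathbf{r}$. The final step is to solve this linear system for $\mathbf{k}$, and here the hypothesis that $n$ is even becomes essential: as already noted before the proposition, the Poisson structure has full rank $n$ for even $n$, which (writing the structure matrix as $D P D$ with $D=\mathrm{diag}(x_1,\dots,x_n)$ and $x_1\cdots x_n\neq 0$) is equivalent to the antisymmetric matrix $P$ being invertible. Hence $\mathbf{k}=P^{-1}\mathbf{r}$ is well defined and unique, and with this choice $H$ generates \eqref{system}, completing the argument.

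The only genuine obstacle is the invertibility of $P$, and this is exactly where the even-dimensional restriction enters: for odd $n$ every antisymmetric $P$ is singular, so $P^{-1}\mathbf{r}$ fails to exist for generic $\mathbf{r}$, which is why the odd case must be excluded here and treated separately under the extra conditions on the $r_i$ mentioned in the introduction. Everything else is a routine verification; the computation above contains no hidden difficulty once the bracket is identified with $P$.
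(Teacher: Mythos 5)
Your proof is correct and follows essentially the same route the paper takes (the paper states Proposition \ref{prop1} without a displayed proof, but its content is exactly your computation: rewriting the bracket as $\{x_i,x_j\}=P_{ij}x_ix_j$, computing $\dot{x_i}=\{x_i,H\}=x_i\sum_j P_{ij}(a_jx_j+k_j)$ to obtain \eqref{system2}, and using invertibility of $P$ for even $n$ to set $\mathbf{k}=P^{-1}\mathbf{r}$). Your added remarks on the Jacobi identity and on why odd $n$ fails match the paper's surrounding discussion, so nothing is missing.
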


 {In terms of} the parameters $k_i$ , the system is written as

\begin{equation} \label{system2}
  \dot{x_i}=x_i \left(\sum_{j=1}^n P_{ij}( a_j x_j+k_j) \right)\;.
\end{equation}

For odd $n$, the matrix $P$ is not invertible. Hence, the Hamiltonian structure of Prop. \ref{prop1} does not include 
all the cases of  \eqref{system} for arbitrary $\mathbf{r} \in \mathbb{R}^n$. However, for any $n$ we can  restrict our analysis to the Hamiltonian systems 
\eqref{system2}, i.e. systems \eqref{system} with $\mathbf{r}=P\mathbf{k}$. 

By setting $u_i=\log{x_i}$, the Poisson bracket \eqref{poisson} becomes a constant one, that is 
$\{u_i,u_j\}=P_{ij}$,  and the Hamiltonian function 
$H_u=\sum_{i=1}^n (a_i e^{u_i}+k_i u_i)$. In these coordinates our system is expressed as 
\begin{equation*} 
  \dot{u_i}= \sum_{j=1}^n P_{ij} a_j e^{u_j}+r_i =\sum_{j>i}a_je^{u_j}-\sum_{j<i}a_je^{u_j}+r_i, \ \text{with} \ \mathbf{r}=P\mathbf{k}. 
\end{equation*}

\begin{remark} \label{remark1}
The parameters $a_i,\dots, a_n$ of \eqref{system} can be rescaled to  $a_1c_1,\dots,a_nc_n$, by using the transformation  $x_i\mapsto x_i/c_i$, for 
$c_i> 0$.  This linear transformation preserves the Poisson bracket and gives rise to 
an equivalent Hamiltonian system with Hamiltonian 
$H_y=\sum_{i=1}^n (a_i c_i y_i+k_i\log c_i y_i),$ in the new variables 
$y_i={x_i}/{c_i}$.  {For example, by setting $c_i=\frac{1}{\lvert a_i\rvert}$, all the nonzero $a_i$ can be rescaled to $1$ or $-1$.} Hence, we can consider systems with parameters  $a_i \in \{-1,0,1\}$ without any loss of generality. 
\end{remark}

 {In the present work we will restrict to the even-dimensional case; 
however, a similar approach can be considered for odd dimensions. 
Some additional comments on odd-dimensional cases as well as two examples, for $n=3$ and $n=5$, are given in the appendix.}

\section{Liouville integrability}\label{LI}
Following  \cite{KQV}, we introduce the functions $$v_i:=a_1 x_1+\cdots+a_i x_i, \ \ \  i=1,\dots,n.$$ 
If $a_1 a_2\dots a_n \neq 0$, the functions $v_i$ define new coordinates on $\mathbb{R}^n$ but generally this is not true.  
Furthermore, for any even $n$ we define the functions 
\begin{equation*} \label{JF}
  J_m(\mathbf{x})=\frac{x_1 x_3 \dots x_{2m-1}}{x_2 x_4 \dots x_{2m}}\;,  \ I_m(\mathbf{x})=\frac{x_{2m+2}x_{2m+4}\ldots x_n}{x_{2m+1} x_{2m+3}\ldots x_{n-1}}\;,
  \ F_m(\mathbf{x})=v_{2m} I_m(\mathbf{x})\;,
\end{equation*}
for  $m=1,\dots,\frac{n}{2}$. 
We also set $$H_0:=F_{n/2}=v_n=\sum_{i=1}^n a_i x_i \;,$$ 
 {which corresponds to} the Hamiltonian function in the case of $\mathbf{r}=\mathbf{k}=0$.  
So, the generic Hamiltonian of  \eqref{system} is written as  
$$H=H_0+\sum_{i=1}^n k_i \log x_i.$$
In \cite{KQV}, it is proved that 
  for any $m,l\in \{1,\dots,\frac{n}{2}\}$, 
  \begin{equation} \label{inv}
    \{J_m,J_l\}=\{F_m,F_l\}=\{F_m,H_0\}=0\;,
  \end{equation}
  as well as the following theorem which establishes the Liouville integrability of the system in the case of $\mathbf{r}=0$. 
\begin{theorem}\label{thm:integrable}
  Suppose that $n$ is even. Let $\ell$ denote the smallest integer such that $a_{\ell+1}\neq 0$ and let $\lambda:=\left[\frac{\ell}{2} \right]$. The $\frac{n}{2}$ functions
  $J_1,J_2,\dots,J_{\lambda},F_{\lambda+1},F_{\lambda+2},\dots,F_{\frac{n}{2}-1},H_0$ are pairwise in involution and
  functionally independent.
\end{theorem}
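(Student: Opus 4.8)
The plan is to split the $\binom{n/2}{2}$ pairwise brackets into three types according to the two kinds of functions involved. The brackets among $J_1,\dots,J_\lambda$ and among $F_{\lambda+1},\dots,F_{n/2-1},H_0$ are already supplied by the relations \eqref{inv}, since $H_0=F_{n/2}$: thus $\{J_m,J_l\}=0$ and $\{F_m,F_l\}=\{F_m,H_0\}=0$ are known. The only brackets not covered are the \emph{cross} brackets $\{J_m,F_l\}$ with $1\le m\le\lambda<l\le n/2$ (the value $l=n/2$ encoding $H_0$), and computing these is the heart of the proof.

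I would carry out this computation in the logarithmic coordinates $u_i=\log x_i$, in which the bracket is the constant one $\{u_i,u_j\}=P_{ij}$, so that $\{f,g\}=\sum_{i,j}P_{ij}\,\partial_{u_i}f\,\partial_{u_j}g$. Here $\partial_{u_i}J_m=\sigma_i J_m$ with $\sigma_i=(-1)^{i+1}$ for $i\le 2m$ and $\sigma_i=0$ otherwise, while $\partial_{u_j}F_l$ equals $a_jx_j\,I_l$ for $j\le 2l$ and $v_{2l}\,\tau_j I_l$ for $j>2l$, with $\tau_j=(-1)^{j}$. The definition of $\ell$ gives $a_j=0$ for $j\le\ell$, and $m\le\lambda=[\ell/2]$ forces $i\le 2m\le\ell$ whenever $\sigma_i\ne 0$. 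Consequently $i<j$ for every $j$ that actually contributes, so $P_{ij}=1$ throughout: the terms with $j\le 2l$ collapse to $I_l\sum_{j=\ell+1}^{2l}a_jx_j=v_{2l}I_l=F_l$, while the terms with $j>2l$ are proportional to $\sum_{j=2l+1}^{n}\tau_j=0$ (an even number $n-2l$ of alternating signs). Hence the inner sum $\sum_j P_{ij}\,\partial_{u_j}F_l$ equals $F_l$ \emph{independently of} $i$, and therefore $\{J_m,F_l\}=J_mF_l\sum_{i=1}^{2m}\sigma_i=0$ because $\sum_{i=1}^{2m}(-1)^{i+1}=0$. The case $l=n/2$ is the identical computation with $I_{n/2}\equiv 1$.

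For functional independence I would exploit the same vanishing of the low-index $a_j$: since $a_j=0$ for $j\le\ell$, each $v_{2l}$ with $l\ge\lambda+1$ involves only $x_{\ell+1},\dots,x_{2l}$, so every $F_l$ and $H_0$ depends only on $x_{\ell+1},\dots,x_n$, whereas $J_1,\dots,J_\lambda$ depend only on $x_1,\dots,x_{2\lambda}$ with $2\lambda\le\ell$. The differentials thus fall into two blocks with disjoint coordinate supports, the full Jacobian is block diagonal, and it suffices to check independence within each block. The block $J_1,\dots,J_\lambda$ is immediate: $d\log J_m=\sum_{k\le 2m}\sigma_k\,du_k$ is triangular in $m$ (only $J_m$ contains $u_{2m}$), so these $\lambda$ covectors are independent away from the coordinate hyperplanes.

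The main obstacle is the independence of the second block $F_{\lambda+1},\dots,F_{n/2-1},H_0$. I would establish it by exhibiting a nonvanishing maximal minor of the matrix of the covectors $d\log F_l=dv_{2l}/v_{2l}+d\log I_l$ and $d\log H_0=dv_n/v_n$ at a generic point. The monomial parts $d\log I_l=\sum_{k\ge 2l+1}\tau_k\,du_k$ furnish a staircase, since $I_l$ is the first to reach index $2l+1$ as $l$ decreases, and $H_0$ is singled out as the unique function carrying no $I$-part. The delicate point is that the linear forms $v_{2l}$ have overlapping supports $\{\ell+1,\dots,2l\}$, so this is not a staircase of disjoint supports; I would remove these contributions either by row reduction using the nested relations $v_{2l+2}=v_{2l}+a_{2l+1}x_{2l+1}+a_{2l+2}x_{2l+2}$, or by evaluating along a curve $x_j=t^{w_j}$ with increasing weights $w_j$ so that the $v$-parts become subdominant, reducing the minor to a triangular one with nonzero diagonal. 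This gives full rank $n/2-\lambda$ for the second block and, together with the first, the required functional independence.
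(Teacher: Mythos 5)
A preliminary remark on provenance: the paper itself contains no proof of Theorem~\ref{thm:integrable}; it is quoted from \cite{KQV} together with the relations \eqref{inv}, so your attempt must be judged on its own merits rather than against an internal argument. Your involution half stands up completely. Splitting the pairs into those covered by \eqref{inv} (using $H_0=F_{n/2}$) plus the cross brackets $\{J_m,F_l\}$ with $m\le\lambda<l\le n/2$ is exactly the right reduction, and your computation of the cross brackets in the coordinates $u_i=\log x_i$ is correct: for $i\le 2m\le 2\lambda\le\ell$ every contributing index $j$ satisfies $j>i$ (the $v$-part only contributes for $j\ge\ell+1$ because $a_j=0$ for $j\le\ell$), so $P_{ij}=1$ throughout, the inner sum $\sum_j P_{ij}\,\partial_{u_j}F_l$ equals $F_l+F_l\sum_{j=2l+1}^{n}(-1)^j=F_l$ independently of $i$, and $\{J_m,F_l\}=J_m F_l\sum_{i=1}^{2m}(-1)^{i+1}=0$. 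The block decomposition of the Jacobian by disjoint variable supports (the $J$'s on $x_1,\dots,x_{2\lambda}$, the $F$'s and $H_0$ on $x_{\ell+1},\dots,x_n$) and the triangular argument for the $J$-block are also fine.

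The genuine gap is the independence of $F_{\lambda+1},\dots,F_{\frac{n}{2}-1},H_0$, which you explicitly leave to one of two unexecuted strategies, and these are not equally viable. The weight-curve version fails as described: since $\sum_{k\le 2l}a_kx_k/v_{2l}=1$ identically, the $v$-part of $d\log F_l$ can never be made subdominant; along $x_j=t^{w_j}$ with increasing weights it converges to the unit covector $du_{k^*(l)}$, where $k^*(l)$ is the largest $k\le 2l$ with $a_k\neq 0$, and since the hypothesis only guarantees $a_{\ell+1}\neq 0$, these indices may coincide for many different $l$ (e.g.\ when $a_j=0$ for all $j>\ell+1$), so the limiting matrix is not triangular and its invertibility would itself require a separate argument. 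By contrast, your row-reduction suggestion does work, and completing it takes only a few lines: restrict to the columns $u_{\ell+1},u_{2\lambda+3},u_{2\lambda+5},\dots,u_{n-1}$, multiply the row of $d\log F_l$ by $v_{2l}$, and subtract each row from the next; using $v_{2l+1}=v_{2l}+a_{2l+1}x_{2l+1}$, the resulting matrix is upper triangular with diagonal entries $a_{\ell+1}x_{\ell+1},\,v_{2\lambda+3},\,v_{2\lambda+5},\dots,v_{n-1}$, each of which is a not-identically-vanishing function (every $v_{2l+1}$ contains the term $a_{\ell+1}x_{\ell+1}$ with $a_{\ell+1}\neq 0$), so the minor is nonzero on a dense open set. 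As submitted, however, the independence of this block is asserted rather than proved, so the proof is incomplete at precisely the step you yourself flagged as delicate.
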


Here, our first goal is to determine the parameters $a_i$ and $\mathbf{r}$, so that the more general system \eqref{system} inherits the same integrals as the 
$\mathbf{r}=0$ case which ensure  Liouville integrability.  In the following, we always assume that $n$ is even.

\begin{lemma} \label{lem}
For any $m=1,\ldots, \frac{n}{2}-1$
\begin{equation} \label{Fint}
\{F_m,H\}=I_m \sum_{j=1}^{2m} k_j(v_j+v_{j-1}-v_{2m}) \;. 
\end{equation}
\end{lemma}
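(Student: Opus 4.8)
The plan is to use the involution relation $\{F_m,H_0\}=0$ from \eqref{inv} to strip off the polynomial part of the Hamiltonian. Writing $H=H_0+\sum_{j=1}^n k_j\log x_j$ and using bilinearity of the bracket,
$$\{F_m,H\}=\{F_m,H_0\}+\sum_{j=1}^n k_j\{F_m,\log x_j\}=\sum_{j=1}^n k_j\{F_m,\log x_j\}.$$
It therefore suffices to evaluate each $\{F_m,\log x_j\}$ and to show that it vanishes for $j>2m$, while equalling $I_m\,(v_j+v_{j-1}-v_{2m})$ for $1\le j\le 2m$, with the convention $v_0=0$; substituting these values then reproduces \eqref{Fint} after discarding the vanishing tail.

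For the individual brackets I would pass to the logarithmic coordinates $u_i=\log x_i$, in which \eqref{poisson} becomes the constant bracket $\{u_i,u_j\}=P_{ij}$, so that $\{f,\log x_j\}=\sum_{l}P_{lj}\,\partial f/\partial u_l$ for any $f$. Applying the Leibniz rule to $F_m=v_{2m}I_m$ separates the computation into the two pieces $\{v_{2m},\log x_j\}$ and $\{I_m,\log x_j\}$. Since $v_{2m}=\sum_{l\le 2m}a_lx_l$ depends only on $u_1,\dots,u_{2m}$, reading off $P_{lj}=\pm1$ according to whether $l<j$ or $l>j$ gives directly $\{v_{2m},\log x_j\}=v_{j-1}+v_j-v_{2m}$ when $j\le 2m$, and $\{v_{2m},\log x_j\}=v_{2m}$ when $j>2m$ (every index $l\le 2m$ then satisfies $l<j$).

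The main work is the bracket $\{I_m,\log x_j\}$. Writing $\log I_m=\sum_{l>2m}\epsilon_l\,u_l$ with $\epsilon_l=(-1)^l$ (the alternating signs on the tail indices $2m+1,\dots,n$), one has $\{I_m,\log x_j\}=I_m\sum_{l>2m}\epsilon_l P_{lj}$, and I would evaluate the alternating sum $\sum_{l>2m}\epsilon_l P_{lj}$ by a parity count. For $j\le 2m$ every tail index obeys $l>j$, so $P_{lj}=-1$ and the sum equals $-\sum_{l>2m}\epsilon_l=0$, because $n$ being even makes the tail carry equally many even and odd indices. For $j>2m$ I would split the tail at $l=j$ and count the parities of the two resulting blocks; the partial alternating sums collapse to a single $-1$ irrespective of the parity of $j$, so $\sum_{l>2m}\epsilon_l P_{lj}=-1$ and hence $\{I_m,\log x_j\}=-I_m$. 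This parity bookkeeping, which hinges essentially on $n$ being even, is the step I expect to demand the most care.

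Combining the two pieces through the Leibniz rule finishes the argument. For $j>2m$ the contributions cancel, $\{F_m,\log x_j\}=v_{2m}I_m+v_{2m}(-I_m)=0$, while for $j\le 2m$ only the $v_{2m}$ term survives because $\{I_m,\log x_j\}=0$ there, giving $\{F_m,\log x_j\}=I_m\,(v_j+v_{j-1}-v_{2m})$. Feeding these into the reduced expression for $\{F_m,H\}$ yields exactly \eqref{Fint}.
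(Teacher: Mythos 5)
Your proof is correct and follows essentially the same route as the paper: strip off $H_0$ using the involution relation \eqref{inv}, split $F_m=v_{2m}I_m$ by the Leibniz rule, evaluate the two brackets $\{v_{2m},\log x_j\}$ and $\{I_m,\log x_j\}$ in the two cases $j\le 2m$ and $j>2m$, and observe the cancellation for $j>2m$. The one substantive difference is at the step $\{I_m,\log x_j\}$: the paper simply imports the identity $\{I_m,x_j\}=-I_m x_j$ (for $j>2m$) and $\{I_m,x_j\}=0$ (for $j\le 2m$) from the earlier work \cite{KQV}, whereas you derive it from scratch by passing to the coordinates $u_i=\log x_i$, writing $\log I_m=\sum_{l>2m}(-1)^l u_l$, and carrying out the alternating-sign count $\sum_{l>2m}(-1)^l P_{lj}$, correctly using the evenness of $n$ to make the tail sum vanish when $j\le 2m$ and collapse to $-1$ when $j>2m$. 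This makes your argument self-contained where the paper's relies on a citation, at the cost of the parity bookkeeping you flagged; both yield exactly \eqref{Fint}.
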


\begin{proof}
 {Since 
$\{F_m,H\}=\{F_m,H_0+\sum_{j=1}^n k_j \log x_j \}$ and  $\{F_m,H_0 \}=0$ (from \eqref{inv}), we get} 
\begin{equation} \label{prbrF}
\{F_m,H\}=\sum_{j=1}^n k_j \{F_m,  \log x_j\}=\sum_{j=1}^n k_j \{v_{2m}I_m,  \log x_j\}. 
\end{equation}
Now $\{v_{2m},\log x_j \}=\sum_{i=1}^{2m}\{x_i,x_j\} \frac{a_i}{x_j}$,  {so it follows that}   
\begin{equation*} 
\{v_{2m},\log x_j \}= \left\{
    \begin{array}{cc}
        v_{2m}\;, & \mbox{for } 2m<j\;,  \\
         v_j+v_{j-1}-v_{2m}\;,  & \mbox{for } j \leq 2m\;.
    \end{array}
    \right.
\end{equation*}
 {Also, we have} 
\begin{equation*} 
\{I_m,x_j \}= \left\{
    \begin{array}{cc}
        -I_m x_j\;, & \mbox{for } 2m<j\;,  \\
         0\;,  & \mbox{for } j \leq 2m\;
    \end{array}
    \right.
\end{equation*}
(this identity  {was proved} in \cite{KQV}) and  
\begin{equation*} 
\{I_m,\log x_j \}= \{I_m,x_j\} \frac{1}{x_j}=\left\{
    \begin{array}{cc}
        -I_m \;, & \mbox{for } 2m<j\;,  \\
         0\;,  & \mbox{for } j \leq 2m\;.
    \end{array}
    \right.
\end{equation*}
 {Therefore, we see that} 
\begin{equation*} 
\{v_{2m} I_m,\log x_j \}= \left\{
    \begin{array}{cc}
        0\;, & \mbox{for } 2m<j\;,  \\
         (v_j+v_{j-1}-v_{2m})I_m\;,  & \mbox{for } j \leq 2m\;,
    \end{array}
    \right.
\end{equation*}
and by substituting in \eqref{prbrF} we derive \eqref{Fint}. 
\end{proof}
We can recast the sum that appears in \eqref{Fint} to derive  
$$ \sum_{j=1}^{2m} k_j(v_j+v_{j-1}-v_{2m})= 
\sum_{j=1}^{2m}x_j a_j(-\sum_{i=1}^{j-1}k_i+\sum_{i=j+1}^{2m}k_i)\ ;$$
hence, from Lemma \ref{lem}, the next proposition follows. 
\begin{proposition} \label{propInt}
 {Suppose that $n$ is even.} For every $m=1,\ldots, \frac{n}{2}-1$,  
$\{F_m,H\}=0$ if and only if $S_{jm}=0$, for every $j=1,\ldots,2m$,  
where 
\begin{equation} \label{Sjm}
S_{jm}=a_j(-\sum_{i=1}^{j-1}k_i+\sum_{i=j+1}^{2m}k_i).
\end{equation}  
\end{proposition}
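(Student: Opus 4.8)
The plan is to reduce the involution condition $\{F_m,H\}=0$ to the vanishing of a single linear form in the coordinates, and then extract the stated conditions on the parameters by a linear-independence argument. Starting from Lemma~\ref{lem} in the form \eqref{Fint}, namely $\{F_m,H\}=I_m\sum_{j=1}^{2m}k_j(v_j+v_{j-1}-v_{2m})$, I would substitute the recasting identity displayed immediately above the statement to obtain
\begin{equation*}
\{F_m,H\}=I_m\sum_{j=1}^{2m}x_j\,S_{jm},
\end{equation*}
with $S_{jm}$ as in \eqref{Sjm}, which are constants depending only on the parameters $a_j,k_i$. This rewriting is the computational heart of the argument; it follows by substituting $v_j+v_{j-1}-v_{2m}=\sum_{p=1}^{j-1}a_px_p-\sum_{p=j+1}^{2m}a_px_p$, interchanging the order of summation, and collecting the coefficient of each $a_px_p$.

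With this identity in hand, the sufficiency direction is immediate: if $S_{jm}=0$ for every $j=1,\dots,2m$, the sum vanishes term by term and hence $\{F_m,H\}=0$. For the converse I would use two facts: the factor $I_m$ is strictly positive on the domain $x_i>0$, and the coordinate functions $x_1,\dots,x_{2m}$ appearing in $\sum_{j=1}^{2m}x_jS_{jm}$ involve none of the variables $x_{2m+1},\dots,x_n$ occurring in $I_m$. Consequently $\{F_m,H\}=0$ throughout the domain forces the linear form $\sum_{j=1}^{2m}x_jS_{jm}$ to vanish identically on the positive orthant, hence everywhere. Since the $S_{jm}$ are constants and $x_1,\dots,x_{2m}$ are linearly independent as functions, every coefficient $S_{jm}$ must vanish, which is precisely the claimed condition.

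I expect no serious obstacle once the recasting identity is established; the only point requiring care is the final step, where one must justify that identical vanishing of a linear form with constant coefficients implies the vanishing of each coefficient. This rests on the strict positivity of $I_m$ on the positive orthant (so that it can be divided out) together with the elementary linear independence of the distinct degree-one monomials $x_1,\dots,x_{2m}$, rather than on any delicate cancellation among the $S_{jm}$ themselves.
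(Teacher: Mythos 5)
Your proposal is correct and follows exactly the paper's route: the paper derives the same identity $\{F_m,H\}=I_m\sum_{j=1}^{2m}x_jS_{jm}$ by combining Lemma \ref{lem} with the displayed recasting of the sum, and then asserts the proposition follows, leaving implicit precisely the two points you spell out (positivity of $I_m$ on the domain $x_i>0$ and linear independence of the monomials $x_1,\dots,x_{2m}$). You have simply filled in those routine details; there is no gap and no divergence from the paper's argument.
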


Solutions of the system 
$S_{jm}=0$, for $m=1,\dots,\frac{n}{2}-1$ and $j=1,\ldots,2m$, provide conditions on the parameters $a_i$ and $k_i$ ensuring that  the functions $F_{1},F_{2},\dots,F_{\frac{n}{2}-1}$ 
are first integrals of the system. Moreover, 
according to \eqref{inv},  these integrals are pairwise in involution. Therefore, in the case where $F_m \neq 0$, for all $m=1,\dots, \frac{n}{2}-1$,  these conditions on the parameters provide 
Liouville integrability.   
For example, in the particular case where $a_j \neq 0$, for every $j=1,\dots, n$, the corresponding system implies the unique solution $k_1=k_2=\dots=k_{n-2}=0$. 

\begin{corollary}
For $a_1 a_2 \dots a_n \neq 0$, $k_1=k_2=\dots=k_{n-2}=0$, $k_{n-1},k_{n} \in \mathbb{R}$ and $\mathbf{r}=P\mathbf{k}=(k_{n-1}+k_n, k_{n-1}+k_n, \dots, k_n,-k_{n-1})$, the Hamiltonian system \eqref{system} is Liouville integrable 
with first integrals $H,F_{1},F_{2},\dots,F_{\frac{n}{2}-1}${\footnote{The proof of the functional independence of the integrals is given in Prop. \ref{indip}.}}. 
\end{corollary}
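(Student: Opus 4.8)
The plan is to assemble the results established immediately above, since the corollary is essentially the specialization of Proposition~\ref{propInt} to the nondegenerate regime $a_1\cdots a_n\neq 0$; there is little genuinely new to prove beyond bookkeeping and one deferred independence argument.

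First I would verify that, for the prescribed parameters, each $F_m$ is a first integral. By Proposition~\ref{propInt}, $\{F_m,H\}=0$ for a given $m\in\{1,\dots,\frac{n}{2}-1\}$ if and only if $S_{jm}=0$ for all $j=1,\dots,2m$. The key observation is that every index $i$ appearing in the definition \eqref{Sjm} of $S_{jm}$ satisfies $i\le 2m\le n-2$; hence $S_{jm}$ depends only on $k_1,\dots,k_{n-2}$, all of which vanish by hypothesis. Therefore $S_{jm}=0$ automatically, irrespective of the values of $a_j$, $k_{n-1}$, $k_n$, and each $F_m$ with $m\le\frac{n}{2}-1$ Poisson-commutes with $H$.

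Next I would establish that the full collection $H,F_1,\dots,F_{\frac{n}{2}-1}$ is pairwise in involution. The relations $\{F_m,F_l\}=0$ and $\{F_m,H_0\}=0$ are already recorded in \eqref{inv}; combined with $\{F_m,H\}=0$ from the previous step and the triviality $\{H,H\}=0$, this yields mutual involutivity of all $\frac{n}{2}$ functions. Since the bracket \eqref{poisson} has rank $n$ for even $n$, Liouville integrability requires exactly $\frac{n}{2}$ functionally independent functions in involution, and we have produced precisely $\frac{n}{2}$ such functions. The functional independence is the only genuinely nontrivial ingredient, and it is deferred to Proposition~\ref{indip}; it is worth noting in passing that each $F_m=v_{2m}I_m$ is nonconstant on a dense open set, since $v_{2m}$ is a nontrivial linear form with nonzero coefficients and $I_m$ is nonvanishing for $x_i>0$, so the independence claim is meaningful.

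Finally I would confirm the stated form of $\mathbf{r}$ by direct computation. Writing $r_i=\sum_{j}P_{ij}k_j=\sum_{j>i}k_j-\sum_{j<i}k_j$ and using $k_1=\dots=k_{n-2}=0$, the sums collapse: for $i\le n-2$ both $k_{n-1}$ and $k_n$ lie strictly to the right of $i$, giving $r_i=k_{n-1}+k_n$; for $i=n-1$ only $k_n$ survives on the right, giving $r_{n-1}=k_n$; and for $i=n$ only $k_{n-1}$ remains on the left, giving $r_n=-k_{n-1}$. This reproduces $\mathbf{r}=(k_{n-1}+k_n,\dots,k_{n-1}+k_n,k_n,-k_{n-1})$ exactly as claimed, completing the argument modulo the deferred independence in Proposition~\ref{indip}.
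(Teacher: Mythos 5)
Your proof is correct and follows essentially the same route as the paper: the corollary is obtained by specializing Proposition~\ref{propInt} (noting that every index occurring in $S_{jm}$ is at most $2m\le n-2$, so all relevant $k_i$ vanish), invoking \eqref{inv} for pairwise involutivity, and deferring functional independence to Proposition~\ref{indip}, exactly as the paper's footnote does.
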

 
Now, let  $a_1=a_2=\dots=a_{\ell}=0$, $a_{\ell+1}\neq 0$ and $\lambda:=\left[\frac{\ell}{2} \right]$. In such a case, $F_1=\dots=F_{\lambda}=0$. So, for any choice of parameters there are not enough $F$-type integrals to ensure the integrability of the system. However, Theorem \ref{thm:integrable} suggests that we could probably replace the first $\lambda$ missing $F$-integrals by $\lambda$ $J$-integrals. Hence next, we are going to determine the conditions on the parameters to ensure that $\{J_m,H\}=0$, for $m=1,\dots, \lambda$ 
{\footnote{For $m>\lambda$,  $J_m$ cannot be an integral of the system i.e. $\{J_m,H\}\neq0$. So, the total number of $F$ and $J$ integrals cannot exceed $\frac{n}{2}-1$.}}. 

\begin{lemma} \label{lemma2}
Let  $a_1=a_2=\dots=a_{\ell}=0$, $a_{\ell+1}\neq 0$ and $\lambda:=\left[\frac{\ell}{2} \right]$. 
Then, 
\begin{equation} \label{lemJInt}
\{J_m,H\}=J_m(k_1+k_2+\dots+k_{2m}),
\end{equation}
for $m=1,\dots,\lambda$.
\end{lemma}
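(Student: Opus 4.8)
The plan is to follow the template of the proof of Lemma \ref{lem}, but to exploit the fact that $J_m$ is a Laurent monomial in the coordinates, which makes its bracket with any $\log x_j$ completely explicit. Writing $H=H_0+\sum_{j=1}^n k_j\log x_j$, I split
$$\{J_m,H\}=\{J_m,H_0\}+\sum_{j=1}^n k_j\{J_m,\log x_j\}$$
and treat the two pieces separately. The logarithmic variables $u_i=\log x_i$, in which \eqref{poisson} becomes the constant bracket $\{u_i,u_j\}=P_{ij}$, are the natural setting: since $\log J_m=\sum_{i=1}^{2m}\epsilon_i u_i$ with $\epsilon_i=(-1)^{i+1}$, the Leibniz rule gives
$$\{J_m,\log x_j\}=J_m\{\log J_m,u_j\}=J_m\sum_{i=1}^{2m}\epsilon_iP_{ij},$$
and likewise $\{J_m,x_j\}=J_m x_j\sum_{i=1}^{2m}\epsilon_iP_{ij}$. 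Thus everything reduces to the single coefficient $c_j^{(m)}:=\sum_{i=1}^{2m}\epsilon_iP_{ij}$.

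The core computation is to show that $c_j^{(m)}=1$ for $1\le j\le 2m$ and $c_j^{(m)}=0$ for $j>2m$. For $j>2m$ every index $i\le 2m$ satisfies $i<j$, so $P_{ij}=1$ and $c_j^{(m)}=\sum_{i=1}^{2m}\epsilon_i=0$, the alternating sum over an even range vanishing. For $j\le 2m$ I would split the sum at $i=j$ (where $P_{jj}=0$) into the contributions from $i<j$ (with $P_{ij}=1$) and $i>j$ (with $P_{ij}=-1$), and then invoke the elementary fact that the partial alternating sum $\sum_{i=1}^{N}(-1)^{i+1}$ equals $1$ for $N$ odd and $0$ for $N$ even. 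A short check distinguishing the parity of $j$ then yields $c_j^{(m)}=1$ in both the even and the odd case. This parity bookkeeping is the only place where any care is required.

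It remains to dispose of $\{J_m,H_0\}$, and this is exactly where the hypothesis enters. From $H_0=\sum_l a_l x_l$ and the formula above, $\{J_m,H_0\}=J_m\sum_l a_l x_l\,c_l^{(m)}$. Because $m\le\lambda=\left[\frac{\ell}{2}\right]$ forces $2m\le\ell$, every index $l\le 2m$ has $a_l=0$, while every index $l>2m$ has $c_l^{(m)}=0$ by the computation just made; so each summand vanishes and $\{J_m,H_0\}=0$. Combining the two pieces with $c_j^{(m)}=1$ on $\{1,\dots,2m\}$ and $c_j^{(m)}=0$ beyond gives
$$\{J_m,H\}=J_m\sum_{j=1}^{2m}k_j=J_m(k_1+k_2+\dots+k_{2m}),$$
which is \eqref{lemJInt}. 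I expect the parity analysis of $c_j^{(m)}$ to be the main (though modest) obstacle; once $c_j^{(m)}\equiv 1$ on $\{1,\dots,2m\}$ and $\equiv 0$ for $j>2m$ is established, the rest is assembling the pieces and using $2m\le\ell$ to kill the $H_0$-bracket.
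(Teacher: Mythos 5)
Your proof is correct, and its core computation is the same as the paper's: the paper also splits $H=H_0+\sum_{j}k_j\log x_j$ and shows $\{J_m,\log x_j\}=J_m$ for $j\le 2m$ and $=0$ for $j>2m$, using the identity $x_i\,\partial J_m/\partial x_i=(-1)^{i+1}J_m$ for $i\le 2m$ (and $=0$ beyond) — this is exactly your coefficient $c_j^{(m)}=\sum_{i=1}^{2m}\epsilon_iP_{ij}$ written with partial derivatives instead of the constant bracket $\{u_i,u_j\}=P_{ij}$. Where you genuinely depart from the paper is the term $\{J_m,H_0\}$: the paper disposes of it by citing Theorem \ref{thm:integrable} (the involutivity result imported from \cite{KQV}, whose hypotheses match those of the lemma), whereas you re-derive it from the same coefficients, noting that $a_l=0$ for $l\le 2m$ (since $m\le\lambda$ forces $2m\le\ell$) while $c_l^{(m)}=0$ for $l>2m$. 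Your route buys self-containedness and makes visible exactly where the hypothesis $a_1=\dots=a_\ell=0$ enters — indeed it explains the paper's footnote that $J_m$ with $m>\lambda$ cannot commute with $H$, since then the sum $\sum_l a_lx_l\,c_l^{(m)}$ has surviving terms — at the cost of redoing, in a special case, work already done in \cite{KQV}; the paper's citation is shorter but leaves the role of the hypothesis implicit. One expository caveat: you assert rather than carry out the parity check that $c_j^{(m)}=1$ for $j\le 2m$. It does go through: splitting at $i=j$ gives $c_j^{(m)}=\sum_{i=1}^{j-1}\epsilon_i-\sum_{i=j+1}^{2m}\epsilon_i$, and for $j$ even the two partial sums are $1$ and $0$, while for $j$ odd they are $0$ and $-1$ (the latter entering with a minus sign), so $c_j^{(m)}=1$ in both cases; this is a gap of exposition, not of substance.
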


\begin{proof}
We consider $m \in \{1,\dots,\lambda\}.$ From Theorem \ref{thm:integrable}, it follows that 
$\{J_m,H_0\}=0$.  
So, 
\begin{equation} \label{lem2JH}
\{J_m,H\}=\sum_{j=1}^n k_j \{J_m,  \log x_j\}=\sum_{j=1}^{2m} k_j \{J_m,  \log x_j\}+\sum_{j=2m+1}^n k_j \{J_m,  \log x_j\}.
\end{equation}
Also, 
$$\{J_m,x_j\}=\sum_{i=1}^n\{x_i,x_j\} \frac{\partial J_m}{\partial x_i}=
\sum_{i=1}^{j-1}x_i x_j \frac{\partial J_m}{\partial x_i}-\sum_{i=j+1}^{n}x_i x_j \frac{\partial J_m}{\partial x_i}$$
and 
\begin{equation*} 
x_i \frac{\partial J_m}{\partial x_i}= \left\{
    \begin{array}{cc}
        (-1)^{i+1} J_m\;, & \mbox{for } 1 \leq i \leq 2m\;,  \\
         0\;,  & \mbox{for } 2m<i \leq n\;.
    \end{array}
    \right.
\end{equation*}
Consequently, after some calculations we obtain 
\begin{equation*} 
\{J_m,x_j \}= \left\{
    \begin{array}{cc}
        J_m x_j\;, & \mbox{for } j \leq 2m\;,  \\
         0\;,  & \mbox{for } j > 2m\; 
    \end{array}
    \right.
\  \text{and} \  \ 
\{J_m,\log x_j \}=\left\{
    \begin{array}{cc}
        J_m\;, & \mbox{for } j \leq 2m\;,  \\
         0\;,  & \mbox{for } j > 2m\;. 
    \end{array}
    \right.
\end{equation*}
Substituting this  {into} \eqref{lem2JH}, we derive \eqref{lemJInt}. 

\end{proof}
 
 Finally, if we combine Lemma \ref{lemma2} with Prop.~\ref{propInt} we come up with the following 
 theorem.
 
 \begin{theorem} \label{mainthm}
  { Suppose that $n$ is even.}  Let $\ell$ denote the smallest integer such that $a_{\ell+1}\neq 0$ and let $\lambda=\left[\frac{\ell}{2} \right]$. The $\frac{n}{2}$ functions
  $J_1,J_2,\dots,J_{\lambda},F_{\lambda+1},F_{\lambda+2},\dots,F_{\frac{n}{2}-1},H$ are pairwise in involution if and only if 
$k_{2i}=-k_{2i-1}$, for $i=1,\dots, \lambda$, and $S_{jm}=0$, for $m=\lambda+1,\dots,\frac{n}{2}-1$, $j=\ell+1,\ldots,2m$.
 \end{theorem}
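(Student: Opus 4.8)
The plan is to split the pairwise brackets into those that are independent of the parameters $\mathbf{k}$ and those that are not, and to dispose of the former immediately using the results already in hand. First I would note that each of the functions $J_1,\dots,J_\lambda,F_{\lambda+1},\dots,F_{\frac n2-1}$ depends on $\mathbf{x}$ alone, so that the brackets $\{J_m,J_l\}$, $\{F_m,F_l\}$ and $\{J_m,F_l\}$ are completely independent of $\mathbf{k}$; by \eqref{inv} and Theorem \ref{thm:integrable} (applied with $\mathbf{r}=0$, where $H$ reduces to $H_0$) every one of them vanishes identically. Hence the entire content of the involutivity requirement is concentrated in the brackets $\{J_m,H\}$ for $m=1,\dots,\lambda$ and $\{F_m,H\}$ for $m=\lambda+1,\dots,\frac n2-1$, together with the trivial $\{H,H\}=0$.

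For the $J$-type brackets I would use Lemma \ref{lemma2}, which yields $\{J_m,H\}=J_m(k_1+\cdots+k_{2m})$ for $m=1,\dots,\lambda$. Because each $J_m$ is a ratio of the strictly positive coordinates and therefore nowhere zero, $\{J_m,H\}=0$ is equivalent to $k_1+\cdots+k_{2m}=0$. Imposing this for every $m=1,\dots,\lambda$ and subtracting consecutive relations telescopes the cumulative sums, showing that the whole family is equivalent to $k_{2i-1}+k_{2i}=0$, that is $k_{2i}=-k_{2i-1}$, for $i=1,\dots,\lambda$. This is precisely the first block of conditions, and the equivalence runs in both directions.

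For the $F$-type brackets I would apply Proposition \ref{propInt}, by which $\{F_m,H\}=0$ holds if and only if $S_{jm}=0$ for all $j=1,\dots,2m$. Here the hypothesis $a_1=\cdots=a_\ell=0$ enters decisively: since $S_{jm}=a_j(-\sum_{i=1}^{j-1}k_i+\sum_{i=j+1}^{2m}k_i)$ carries the prefactor $a_j$, the equations with $j\le\ell$ hold automatically and the surviving constraints are exactly $S_{jm}=0$ for $j=\ell+1,\dots,2m$. Letting $m$ range over $\lambda+1,\dots,\frac n2-1$ recovers the second block of conditions in the statement.

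Finally I would assemble the two reductions: the full collection is pairwise in involution precisely when all the brackets above vanish, which by the preceding paragraphs happens if and only if both $k_{2i}=-k_{2i-1}$ for $i=1,\dots,\lambda$ and $S_{jm}=0$ for $m=\lambda+1,\dots,\frac n2-1$, $j=\ell+1,\dots,2m$. I expect no serious obstacle, since the heavy computational work is already carried out in Lemmas \ref{lem} and \ref{lemma2} and Proposition \ref{propInt}; the only points demanding a little care are the telescoping step that converts the cumulative relations $k_1+\cdots+k_{2m}=0$ into the pairwise relations $k_{2i}=-k_{2i-1}$, and the observation that $J_m$ never vanishes on the positive orthant, so that $\{J_m,H\}=0$ is genuinely equivalent to the vanishing of its scalar cofactor.
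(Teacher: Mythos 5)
Your proposal is correct and follows essentially the same route as the paper: reduce the pairwise involutivity to the brackets $\{J_m,H\}$ and $\{F_m,H\}$ via \eqref{inv} and Theorem \ref{thm:integrable}, then apply Lemma \ref{lemma2} (with the telescoping of the cumulative sums) and Proposition \ref{propInt} (with the conditions for $j\le\ell$ holding automatically since $a_j=0$). Your added remarks on the nonvanishing of $J_m$ on the positive orthant and the explicit telescoping step are details the paper leaves implicit, but the argument is the same.
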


\begin{proof}
Let  $a_1=a_2=\dots=a_{\ell}=0$, $a_{\ell+1}\neq 0$ and $\lambda=\left[\frac{\ell}{2} \right]$. 
From Lemma \ref{lemma2}, we conclude that $\{J_1,H\}=\{J_2,H\}=\dots=\{J_{\lambda},H\}=0$ if and only if $k_1+k_2+\dots+k_{2i}=0$, for all $i=1,\dots, \lambda$, which is equivalent to $k_{2i}=-k_{2i-1}$, for $i=1,\dots, \lambda$. 
Also, from Prop. \ref{propInt}, we derive that for $m=\lambda+1,\dots,\frac{n}{2}-1$, 
$\{F_m,H\}=0$ if and only if $S_{jm}=0$, for every $j=\ell+1,\ldots,2m$ 
(for $j=1,\dots \ell$, $S_{jm}=0$, since $a_1=\dots=a_l=0$). Finally,  {Theorem \ref{thm:integrable}} shows that  {all the other} pairs of functions are in involution too.
\end{proof}

We will close this section by proving the functional independence of the integrals. 
\begin{proposition} \label{indip}
 {For every even $n$,} the functions  $J_1,J_2,\dots,J_{\lambda},F_{\lambda+1},F_{\lambda+2},\dots,F_{\frac{n}{2}-1}$,\\
$H$, are functionally
independent. 
\end{proposition}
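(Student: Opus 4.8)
The plan is to deduce the functional independence of $J_1,\dots,J_{\lambda},F_{\lambda+1},\dots,F_{\frac{n}{2}-1},H$ from the already-established independence of the corresponding $\mathbf r=0$ family, rather than attempting a direct Jacobian computation. Recall that functional independence of functions $g_1,\dots,g_p$ on the positive orthant is equivalent to $dg_1\wedge\cdots\wedge dg_p\not\equiv0$, i.e.\ to linear independence of their differentials on a dense open set. By Theorem~\ref{thm:integrable} the $\frac{n}{2}$ functions $J_1,\dots,J_{\lambda},F_{\lambda+1},\dots,F_{\frac{n}{2}-1},H_0$ are functionally independent, so there is a dense open (and, as noted below, scaling-invariant) set $W$ on which the $\frac{n}{2}-1$ differentials $dJ_m,dF_m$ are linearly independent and, writing $V_{\mathbf x}:=\mathrm{span}\{dJ_m(\mathbf x),dF_m(\mathbf x)\}$, we have $dH_0(\mathbf x)\notin V_{\mathbf x}$ for all $\mathbf x\in W$. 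Since $H=H_0+\sum_i k_i\log x_i$ shares all its $J$- and $F$-partners with the $\mathbf r=0$ family, it suffices to show $dH(\mathbf x)\notin V_{\mathbf x}$ generically: if this failed, then $dJ_1\wedge\cdots\wedge dF_{\frac{n}{2}-1}\wedge dH\equiv0$ would force $dH(\mathbf x)\in V_{\mathbf x}$ for every $\mathbf x\in W$, the situation I intend to rule out.

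The key idea is a scaling (quasi-homogeneity) argument. Under the dilation $\mathbf x\mapsto e^{s}\mathbf x$ the building blocks carry definite weights: $J_m$ and $I_m$ are homogeneous of degree $0$, while $v_{2m}$, $F_m$ and $H_0$ are homogeneous of degree $1$. Hence, in the global coframe $\{dx_j\}$,
\begin{equation*}
dJ_m(e^{s}\mathbf x)=e^{-s}\,dJ_m(\mathbf x),\qquad dF_m(e^{s}\mathbf x)=dF_m(\mathbf x),\qquad dH(e^{s}\mathbf x)=dH_0(\mathbf x)+e^{-s}\,\lambda(\mathbf x),
\end{equation*}
where $\lambda:=d\!\left(\sum_i k_i\log x_i\right)=\sum_j (k_j/x_j)\,dx_j$ and $dH_0=\sum_j a_j\,dx_j$ is constant. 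In particular $V_{e^{s}\mathbf x}=\mathrm{span}\{e^{-s}dJ_m(\mathbf x),dF_m(\mathbf x)\}=V_{\mathbf x}$, so $V$ is invariant along each scaling orbit and $W$ is scaling-invariant. Assuming $dH(\mathbf y)\in V_{\mathbf y}$ for all $\mathbf y\in W$ and applying this along an orbit $\mathbf y=e^{s}\mathbf x$ gives $dH_0(\mathbf x)+\tau\,\lambda(\mathbf x)\in V_{\mathbf x}$ for all $\tau=e^{-s}$ in an interval; since $V_{\mathbf x}$ is a fixed subspace and the left-hand side is affine in $\tau$, evaluating at two distinct values and subtracting yields $\lambda(\mathbf x)\in V_{\mathbf x}$ and hence $dH_0(\mathbf x)\in V_{\mathbf x}$, contradicting $dH_0(\mathbf x)\notin V_{\mathbf x}$. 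This forces $dH\notin V$ generically and completes the proof.

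I expect the main obstacle to lie not in the scaling step, which is short, but in the temptation to argue directly. A direct proof—exhibiting an explicit nonvanishing $\frac{n}{2}\times\frac{n}{2}$ minor of the Jacobian—turns out to be awkward, because $F_{\lambda+1},\dots,F_{\frac{n}{2}-1}$ all depend on the same tail variables $x_{2\lambda+1},\dots,x_n$ and because some of the $a_i$ (and $k_i$) may vanish, so no single choice of columns is nonsingular for all admissible parameters. If a self-contained route is nonetheless wanted, I would first use the columns $dx_2,dx_4,\dots,dx_{2\lambda}$ to split off a triangular $J$-block—here one uses $a_1=\dots=a_{2\lambda}=0$, which follows from $2\lambda\le\ell$, to see that the $F$'s and $H$ contribute nothing to these columns after eliminating the $J$'s—thereby reducing to the independence of $F_{\lambda+1},\dots,F_{\frac{n}{2}-1},H$ alone; but checking non-vanishing of the remaining determinant in every parameter regime is exactly the messy point that the scaling argument circumvents.
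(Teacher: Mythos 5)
Your proof is correct, and while it rests on the same two pillars as the paper's (Theorem \ref{thm:integrable} for $\mathbf{k}=0$, plus the dilation $\mathbf{x}\mapsto c\,\mathbf{x}$), it closes the argument by a genuinely different mechanism. The paper uses the dilation as a change of \emph{parameters}: invoking Remark \ref{remark1} with $y_i=x_i/\mu$, it obtains $dH_y=\mu\,dH'$ where $H'$ is the Hamiltonian with data $(\mathbf{a},\mathbf{k}/\mu)$, and it combines this with a perturbative step --- functional independence at $\mathbf{k}=0$ persists, ``by continuity'', for all $\mathbf{k}'$ in some neighbourhood $U$ of $0$ --- so that independence for arbitrary $\mathbf{k}$ follows by choosing $\mu$ with $\mathbf{k}/\mu\in U$. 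You instead keep the parameters fixed and use the dilation pointwise: since $V_{\mathbf{x}}=\mathrm{span}\{dJ_m,dF_m\}$ is constant along each scaling orbit while $dH$ restricted to the orbit traces the affine family $dH_0+\tau\sum_j(k_j/x_j)\,dx_j$, $\tau>0$, membership of $dH$ in $V$ along the whole orbit would force both $dH_0$ and the logarithmic part into $V_{\mathbf{x}}$, contradicting the $\mathbf{k}=0$ result. What your route buys is the elimination of the continuity step, which the paper asserts without justification (making it precise requires fixing a point where the $\mathbf{k}=0$ wedge is nonzero and noting that the wedge depends continuously on $\mathbf{k}$); your two-value affine-family argument is exact and uniform in all parameter regimes, including vanishing $a_i$'s and $k_i$'s. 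What the paper's route buys is brevity and direct reuse of Remark \ref{remark1}. Two cosmetic repairs to your write-up: rename your $1$-form $\lambda$, since that letter is already reserved in the statement for $\left[\frac{\ell}{2}\right]$ (the number of $J$-integrals); and state explicitly that the equivalence ``functionally independent iff the wedge does not vanish identically iff it is nonzero on a dense open set'' uses real-analyticity of $J_m$, $F_m$, $H$ on the connected positive orthant.
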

\begin{proof}
For $\mathbf{k}=0$,  $J_1,\dots,J_{\lambda},F_{\lambda+1},\dots,F_{\frac{n}{2}-1}$,$H$ are functionally independent. This follows from Theorem \ref{thm:integrable}, since in this case 
$H$ coincides with $H_0$. Hence, by continuity the same functions 
remain functionally independent for parameters 
$\mathbf{k}$ in a  sufficiently small open neighborhood $U$ of $\mathbf{k}=0$. Now, let us consider 
any $\mathbf{k}=(k_1,\dots,k_n) \in \mathbb{R}^n$. Then there is $\mu > 0$ and 
$\mathbf{k'}=(k_1'\dots,k_n') \in U$, such that $\mathbf{k}=\mu \mathbf{k}'$. Also,  {in} view of Remark \ref{remark1}, 
we can rescale the $a_i$ parameters to $\mu a_i$, by setting $y_i=x_i/\mu$. The Hamiltonian function in the new $y$-coordinates then becomes 
$$H_y(\mathbf{y})=\sum_{i=1}^n (a_i \mu y_i+k_i\log \mu y_i)=\mu\sum_{i=1}^n (a_i y_i+k'_i\log \mu y_i).$$
So, $dH_y=\mu dH'$, where $H'(\mathbf{y})=\sum_{i=1}^n (a_i y_i+k_i'\log y_i)$, i.e. the  Hamiltonian of the corresponding system with parameters $a_i$ and $k'_i$. 
Therefore, from the functional independence of  $J_1,\dots,J_{\lambda},F_{\lambda+1},\dots,F_{\frac{n}{2}-1},H'$ that we proved, the functional independence of  $J_1,\dots,J_{\lambda},F_{\lambda+1},\dots,F_{\frac{n}{2}-1},H_y$ follows and consequently the functional independence of $J_1,\dots,J_{\lambda},F_{\lambda+1},\dots,F_{\frac{n}{2}-1},H$ 
for all parameters $a_i$ and $k_i$. 

\end{proof}

\section{Symmetry and superintegrability}

In \cite{KKQTV,KQV}, a second set of first integrals in involution has been introduced for the case of 
$\mathbf{r}=0$. By considering this set of integrals we can derive more integrable cases of our system. 
The main observation to accomplish this is that system \eqref{system2} remains invariant under the transformation 
$x_i \mapsto x_{n+1-i}$ and the reparametrization $a_i \mapsto -a_{n+1-i}$, $k_i \mapsto -k_{n+1-i}$, for $i=1,\dots, n$.
Let us now consider the involution $\iota(x_1,x_2,\dots,x_n)\mapsto(x_n,x_{n-1},\dots,x_1)$
and the functions 
\begin{equation*} 
  \tilde{J}_m=J \circ \iota \;,  \ \tilde{I}_m=I \circ \iota \;\;,
  \ \tilde{F}_m=\tilde{v}_{2m} \tilde{I}_m\;,
\end{equation*}
where $\tilde{v}_i:=a_{n+1-i} x_{n+1-i}+a_{n+2-i} x_{n+2-i}+\cdots+a_n x_n$,  
for  $i=1,\dots,n$. Then, by Theorem \ref{mainthm} and the described symmetry of the system  we derive the next  {theorem}.

 \begin{theorem} \label{Revthm}
  {Suppose that $n$ is even.} Let $d$ denote the smallest integer such that $a_{n-d}\neq 0$ and let $\delta=\left[\frac{d}{2} \right]$. The $\frac{n}{2}$ functions
  $\tilde{J}_1,\tilde{J}_2,\dots,\tilde{J}_{\delta},\tilde{F}_{\delta+1},\tilde{F}_{\delta+2},\dots,\tilde{F}_{\frac{n}{2}-1},H$ are pairwise in involution if and only if 
$k_{n+1-2i}=-k_{n+2-2i}$, for $i=1,\dots, \delta$, and $\tilde{S}_{jm}=0$, for $m=\delta+1,\dots,\frac{n}{2}-1$, $j=d+1,\ldots,2m$, where
\begin{equation*} 
\tilde{S}_{jm}=a_{n+1-j}(-\sum_{i=1}^{j-1}k_{n+1-i}+\sum_{i=j+1}^{2m}k_{n+1-i})\;.
\end{equation*}  
 \end{theorem}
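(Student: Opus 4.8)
The plan is to deduce this theorem from Theorem \ref{mainthm} by exploiting the stated symmetry of the system, rather than repeating the involution computations from scratch. The central observation is that the reversal involution $\iota(x_1,\dots,x_n)=(x_n,\dots,x_1)$ is \emph{anti-Poisson} for the log-canonical bracket \eqref{poisson}: since $\{x_{n+1-i},x_{n+1-j}\}=-x_{n+1-i}x_{n+1-j}$ whenever $i<j$, one checks directly that $\{f\circ\iota,\,g\circ\iota\}=-\{f,g\}\circ\iota$ for all $f,g$. Consequently $\iota$ preserves involutivity: a family of functions is pairwise in involution if and only if the family obtained by composing each member with $\iota$ is.

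First I would introduce the reparametrized (``reversed'') system, namely \eqref{system2} with the same Poisson bracket but with parameters $\bar a_i:=-a_{n+1-i}$ and $\bar k_i:=-k_{n+1-i}$. A short computation using $P_{n+1-i,\,n+1-j}=-P_{ij}$ shows that this reversed system is exactly the image of the original one under $\iota$, which is the invariance property quoted before the statement. Denoting by $\hat J_m,\hat F_m,\hat H$ the integrals of Section \ref{LI} built from the barred parameters, I would record the identities $\tilde J_m=\hat J_m\circ\iota$, $\tilde F_m=-\hat F_m\circ\iota$ and $H=-\hat H\circ\iota$; these follow from the definitions of $\tilde v_i$, $\tilde I_m$, $\tilde J_m$ after the substitution $x_j\mapsto x_{n+1-j}$, the signs arising solely from the factor $-a_{n+1-l}$ in $\bar a$.

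Next I would apply Theorem \ref{mainthm} verbatim to the barred system. If $d$ is the smallest integer with $a_{n-d}\neq0$, then $\bar a_1=\dots=\bar a_d=0$ and $\bar a_{d+1}=-a_{n-d}\neq0$, so the integer ``$\ell$'' of Theorem \ref{mainthm} equals $d$ and ``$\lambda$'' equals $\delta=[d/2]$. Theorem \ref{mainthm} then states that $\hat J_1,\dots,\hat J_\delta,\hat F_{\delta+1},\dots,\hat F_{n/2-1},\hat H$ are pairwise in involution if and only if $\bar k_{2i}=-\bar k_{2i-1}$ for $i=1,\dots,\delta$ and $\bar S_{jm}=0$ for $m=\delta+1,\dots,n/2-1$, $j=d+1,\dots,2m$. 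Rewriting these in terms of the original parameters via $\bar k_i=-k_{n+1-i}$ turns the first condition into $k_{n+1-2i}=-k_{n+2-2i}$ and the second into $\tilde S_{jm}=0$, exactly as in the statement.

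Finally I would transfer the conclusion through $\iota$. Because $\iota$ is anti-Poisson, the barred family is pairwise in involution if and only if the composed family $\hat J_m\circ\iota,\hat F_m\circ\iota,\hat H\circ\iota$ is, and by the identities above the latter family differs from $\tilde J_1,\dots,\tilde J_\delta,\tilde F_{\delta+1},\dots,\tilde F_{n/2-1},H$ only by overall signs, which do not affect involutivity. Combining this equivalence with the parameter translation of the previous step yields the asserted ``if and only if''. The only genuinely delicate point is bookkeeping: keeping the three sign flips (from $\bar a$, from $\bar k$, and from the anti-Poisson bracket) mutually consistent, and verifying that the index reflection $i\mapsto n+1-i$ carries the range $j=\ell+1,\dots,2m$ of Theorem \ref{mainthm} precisely onto $j=d+1,\dots,2m$ here; once these are pinned down the argument is purely formal.
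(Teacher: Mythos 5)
Your proposal is correct and takes exactly the route the paper intends: the paper gives no detailed proof of Theorem \ref{Revthm}, asserting only that it follows ``by Theorem \ref{mainthm} and the described symmetry of the system,'' and your argument supplies precisely the missing details --- the anti-Poisson property of $\iota$, the identification $\ell=d$, $\lambda=\delta$ for the reversed parameters $\bar a_i=-a_{n+1-i}$, $\bar k_i=-k_{n+1-i}$, and the translation $\bar S_{jm}=\tilde S_{jm}$. Your sign bookkeeping also checks out ($\tilde J_m=\hat J_m\circ\iota$, $\tilde F_m=-\hat F_m\circ\iota$, $H=-\hat H\circ\iota$, with the overall signs harmless for involutivity), so this is a correct, fully detailed version of the paper's one-line derivation.
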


Theorem \ref{Revthm}, determines different values of the parameters of the system that lead to integrability. Furthermore,  a combination of Theorems \ref{mainthm}-\ref{Revthm} provide some superintegrable cases. 
For any $\ell,d \in \{0,1,\dots, n-1\}$, we consider the following two sets of parameters: 
\begin{align*}
\Sigma_{\ell}=\{&(\mathbf{a},\mathbf{k}) \in \mathbb{R}^{2n}:a_1=a_2=\dots a_{\ell}=0, a_{\ell+1} \neq 0, k_{2i}+k_{2i-1}=S_{jm}=0, \\ 
&\text{for} \ 
 i=1,\dots, \left[\frac{\ell}{2} \right], 
\, m=\left[\frac{\ell}{2} \right]+1,\dots,\frac{n}{2}-1, \ j=\ell+1,\dots,2m \ \}, \\ 
\tilde{\Sigma}_{d}=\{&(\mathbf{a},\mathbf{k}) \in \mathbb{R}^{2n} :a_n=a_{n-1}=\dots=a_{n-d+1}=0,a_{n-d}\neq0, \\ 
& k_{n+1-2i}+k_{n+2-2i}=\tilde{S}_{jm}=0,  
 \text{for} \  i=1,\dots, \left[\frac{d}{2} \right], 
\, m=\left[\frac{d}{2} \right]+1,\dots,\frac{n}{2}-1, \\
& j=d+1,...,2m \ \},
\end{align*}
where $(\mathbf{a},\mathbf{k}):=(a_1,\dots,a_n,k_1,\dots,k_n)$.  
Then we conclude with the following theorem. 

\begin{theorem} \label{lastthm}
If $(\mathbf{a},\mathbf{k}) \in \Sigma_{\ell} \cup \tilde{\Sigma}_{d}$ for some $\ell,d \in \{0,1,\dots, n-1\}$, then  {for every even $n$} system \eqref{system2} with parameters $\mathbf{a},\mathbf{k}$ is Liouville integrable. If 
$ {(\mathbf{a},\mathbf{k})} \in \Sigma_{\ell} \cap \tilde{\Sigma}_{d}$, then the corresponding system \eqref{system2} is superintegrable, i.e. it admits the following $n-1$ functionally independent integrals:\\ $J_1,J_2,\dots,J_{\lambda},F_{\lambda+1},F_{\lambda+2},\dots,F_{\frac{n}{2}-1},
\tilde{J}_1,\tilde{J}_2,\dots,\tilde{J}_{\delta},\tilde{F}_{\delta+1},\tilde{F}_{\delta+2},\dots,\tilde{F}_{\frac{n}{2}-1},H$. 
\end{theorem}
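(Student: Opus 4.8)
The plan is to treat the two integrability assertions separately, obtaining the single-family (Liouville) statements directly from Theorems \ref{mainthm} and \ref{Revthm} together with the independence result of Proposition \ref{indip}, and then to derive superintegrability on the intersection by a functional-independence count for the pooled collection of $n-1$ integrals.

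For the Liouville part, suppose first $(\mathbf{a},\mathbf{k})\in\Sigma_\ell$. By construction $\Sigma_\ell$ is exactly the locus where the hypotheses of Theorem \ref{mainthm} hold, so the $\frac{n}{2}$ functions $J_1,\dots,J_\lambda,F_{\lambda+1},\dots,F_{\frac{n}{2}-1},H$ are pairwise in involution, and Proposition \ref{indip} makes them functionally independent. As the bracket \eqref{poisson} has rank $n$ for even $n$, these $\frac{n}{2}$ integrals in involution deliver Liouville integrability. For $(\mathbf{a},\mathbf{k})\in\tilde{\Sigma}_d$ I would argue symmetrically: the involution $\iota$ together with the reparametrization $a_i\mapsto -a_{n+1-i}$, $k_i\mapsto -k_{n+1-i}$ is a Poisson map conjugating the tilded functions and $H$ to the untilded functions of a reparametrized system (under which $H\mapsto -H$), so involutivity comes from Theorem \ref{Revthm} and functional independence is the image of Proposition \ref{indip} under this map.

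For the superintegrability claim, assume $(\mathbf{a},\mathbf{k})\in\Sigma_\ell\cap\tilde{\Sigma}_d$. Here both families consist of first integrals: the untilded ones commute with $H$ by Theorem \ref{mainthm} and the tilded ones by Theorem \ref{Revthm}, while $H$ is conserved trivially. The pooled list contains $n-1$ functions, the maximal possible number of independent conserved quantities on the $n$-dimensional phase space, so the only remaining task is to prove that these $n-1$ functions are functionally independent.

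To establish independence I would reuse the three-step scheme of Proposition \ref{indip}. At $\mathbf{k}=0$ the constraints cutting out $\Sigma_\ell$ and $\tilde{\Sigma}_d$ hold vacuously, $H$ collapses to $H_0$, and the pooled collection becomes the set of superintegrals of the $\mathbf{r}=0$ system whose functional independence is recorded in \cite{KKQTV,KQV}; lower semicontinuity of the Jacobian rank then propagates independence to an open neighborhood $U$ of $\mathbf{k}=0$. For a general $\mathbf{k}$ one writes $\mathbf{k}=\mu\mathbf{k}'$ with $\mu>0$ and $\mathbf{k}'\in U$ and applies the rescaling $y_i=x_i/\mu$ of Remark \ref{remark1}, which fixes the vanishing pattern of the $a_i$ (hence $\ell,d,\lambda,\delta$) and scales the relevant differentials by a common nonzero factor, transporting independence from $\mathbf{k}'$ to $\mathbf{k}$. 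I expect the base case at $\mathbf{k}=0$ to be the crux: one must verify that the two families truncated at $\lambda$ and $\delta$ together span a maximal rank-$(n-1)$ set rather than hiding a functional relation, which is precisely where the choices $\lambda=[\ell/2]$ and $\delta=[d/2]$, matched against the superintegrals of \cite{KKQTV,KQV}, have to be used.
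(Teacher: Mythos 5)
Your proposal is correct and follows essentially the same route as the paper, which states Theorem \ref{lastthm} without a separate proof precisely because it is the assembly you describe: involutivity of each family from Theorems \ref{mainthm} and \ref{Revthm}, functional independence from Proposition \ref{indip} (and its image under the involution $\iota$), and the pooled count of $n-1$ integrals on the intersection. Your additional step --- rerunning the continuity-plus-rescaling scheme of Proposition \ref{indip} on the pooled family, anchored at the $\mathbf{k}=0$ superintegrals of \cite{KKQTV,KQV} --- is exactly the detail the paper leaves implicit, and it is sound.
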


\begin{example}
The simplest interesting case is $n=4$ 
(for $n=2$ the system is always integrable since it is Hamiltonian). In this case we have, 
\begin{align*}
&\Sigma_0=\{(\mathbf{a},\mathbf{k}) \in \mathbb{R}^{8}:a_1\neq0,k_2=a_2 k_1=0 \}, \\   
&\Sigma_1=\{(\mathbf{a},\mathbf{k}) \in \mathbb{R}^{8}:a_1=0,a_2\neq0,k_1=0 \}, \\ 
&\Sigma_2=\{(\mathbf{a},\mathbf{k}) \in \mathbb{R}^{8}:a_1=a_2=0,a_3\neq0,k_2=-k_1 \},   \\
&\Sigma_3=\{(\mathbf{a},\mathbf{k}) \in \mathbb{R}^{8}:a_1=a_2=a_3=0,a_4\neq0,k_2=-k_1 \}, \\
&\tilde{\Sigma}_0=\{(\mathbf{a},\mathbf{k}) \in \mathbb{R}^{8}:a_4\neq0,k_3=a_3 k_4=0 \}, \\ 
&\tilde{\Sigma}_1=\{(\mathbf{a},\mathbf{k}) \in \mathbb{R}^{8}:a_4=0,a_3\neq0,k_4=0 \}, \\
&\tilde{\Sigma}_2=\{(\mathbf{a},\mathbf{k}) \in \mathbb{R}^{8}:a_4=a_3=0,a_2\neq0,k_4=-k_3 \},\\
&\tilde{\Sigma}_3=\{(\mathbf{a},\mathbf{k}) \in \mathbb{R}^{8}:
a_4=a_3=a_2=0,a_1\neq0,k_4=-k_3 \}, 
\end{align*}
where $(\mathbf{a},\mathbf{k}) =(a_1,\dots,a_4,k_1,\dots, k_4)$. 
Now, using Theorem \ref{lastthm} we can detect different integrable and superintegrable cases. 
So for example, when $a_1\dots a_n \neq 0$,  from $\Sigma_{0} \cup \tilde{\Sigma}_{0}$ 
we come up with two integrable cases,  {for $\mathbf{k}=(0,0,k_3,k_4)$ and $\mathbf{k}=(k_1,k_2,0,0)$}, while 
the only superintegrable case 
that is derived from $\Sigma_{0} \cap \tilde{\Sigma}_{0}$  is when  {$\mathbf{k}=0$}. 
On the other hand, for $a_2=0$ and $a_1 a_3 a_4 \neq 0$, we derive the integrable cases with 
 {
$\mathbf{k}=(k_1,0,k_3,k_4)$ and $\mathbf{k}=(k_1,k_2,0,0)$},
and the superintegrable case for  {
$\mathbf{k}=(k_1,0,0,0)$}.  Proceeding in this way, we can detect all the integrable and superintegrable cases 
given by Theorem \ref{lastthm}.  

\end{example} 

\begin{figure}
\centering
\includegraphics[scale=0.25]{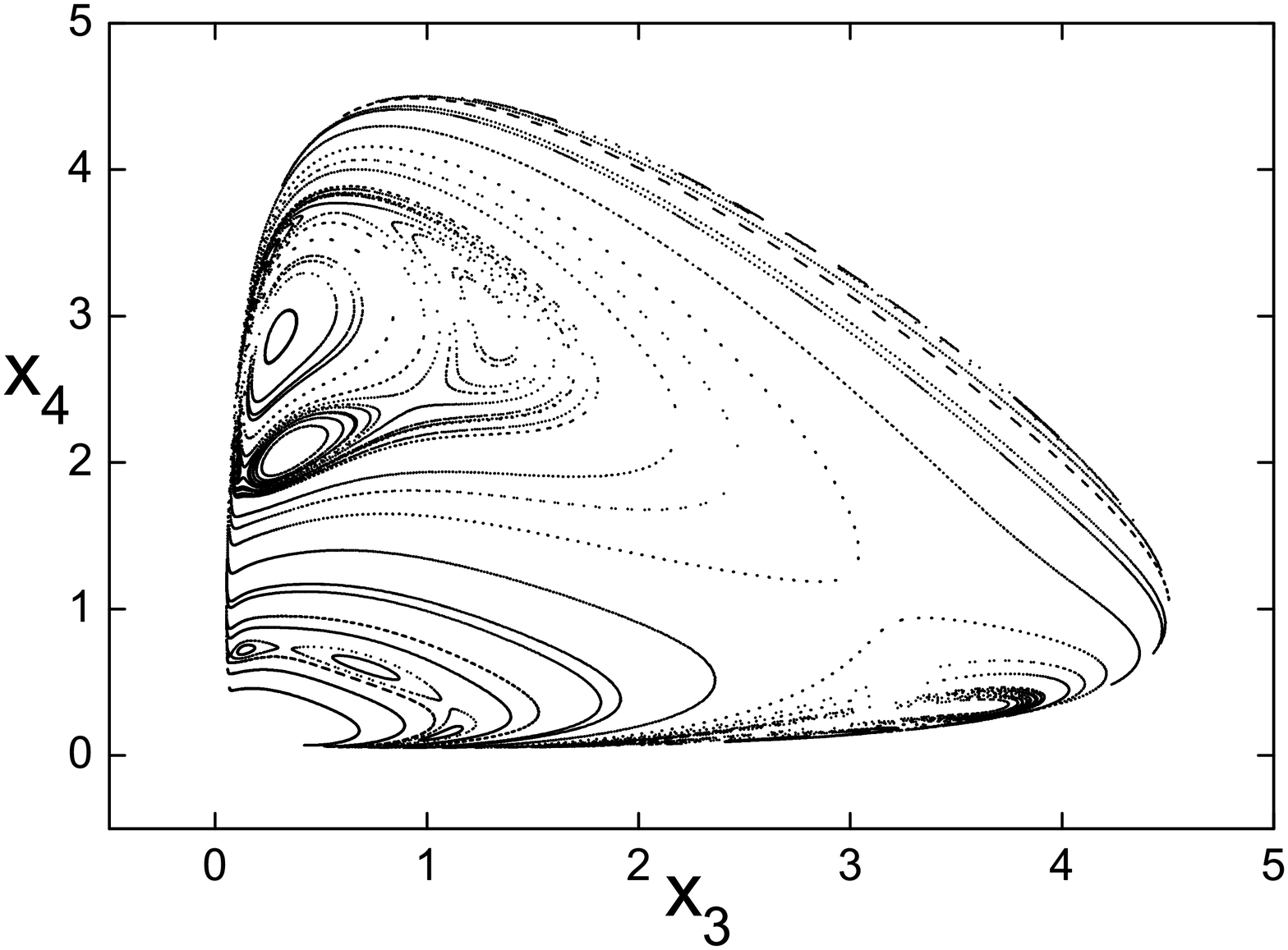}
\includegraphics[scale=0.25]{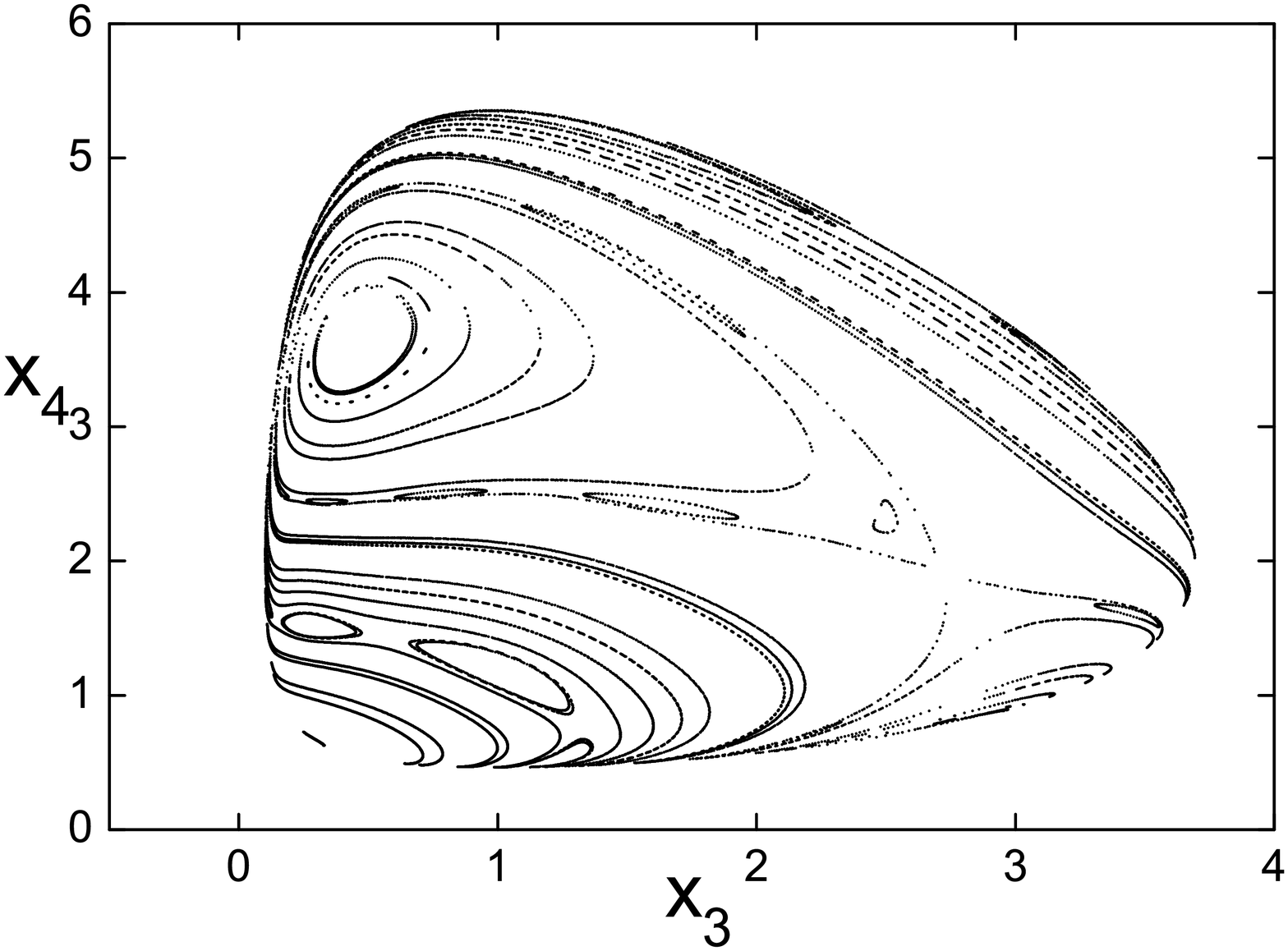} \\ 
\includegraphics[scale=0.25]{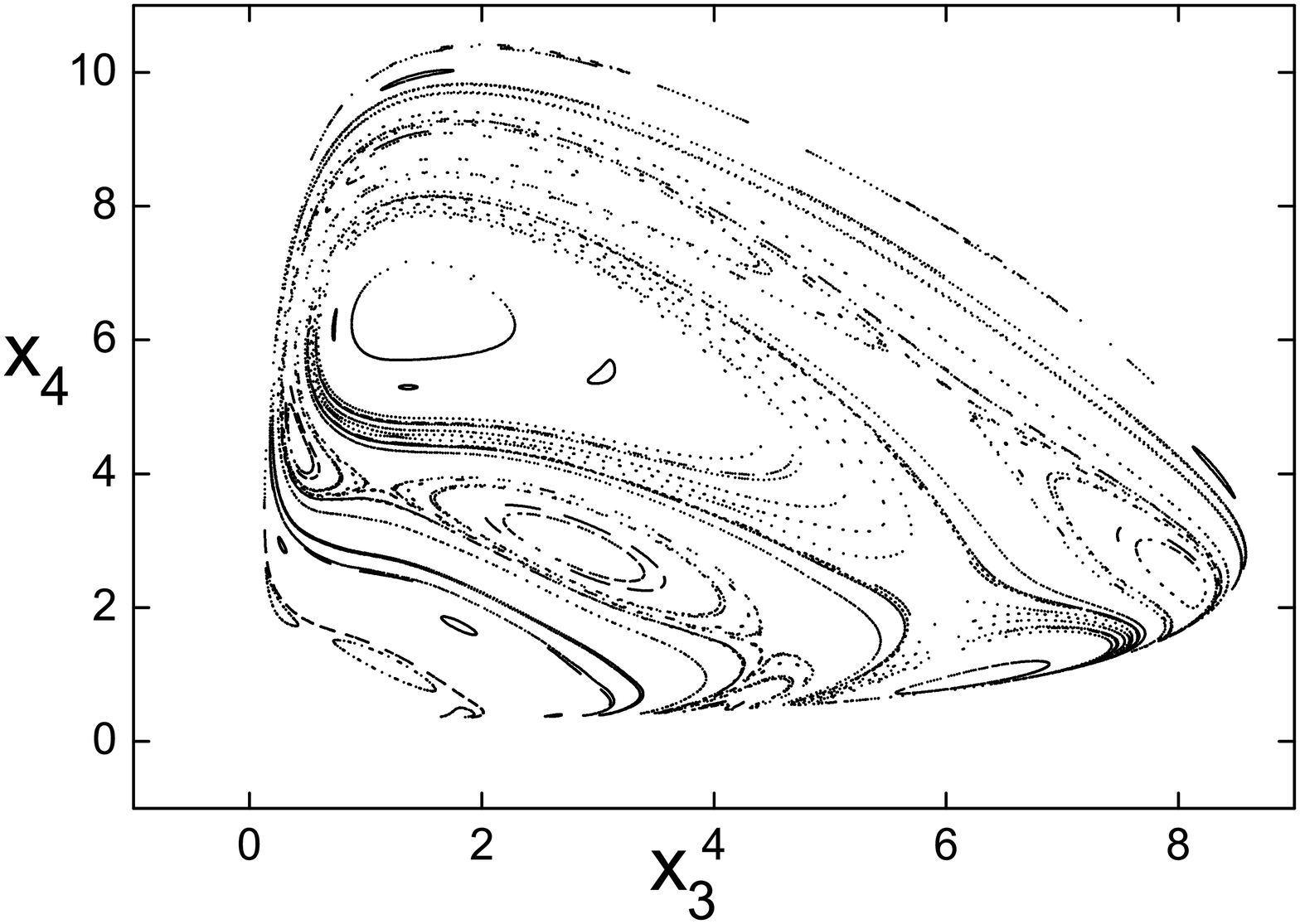}
\includegraphics[scale=0.25]{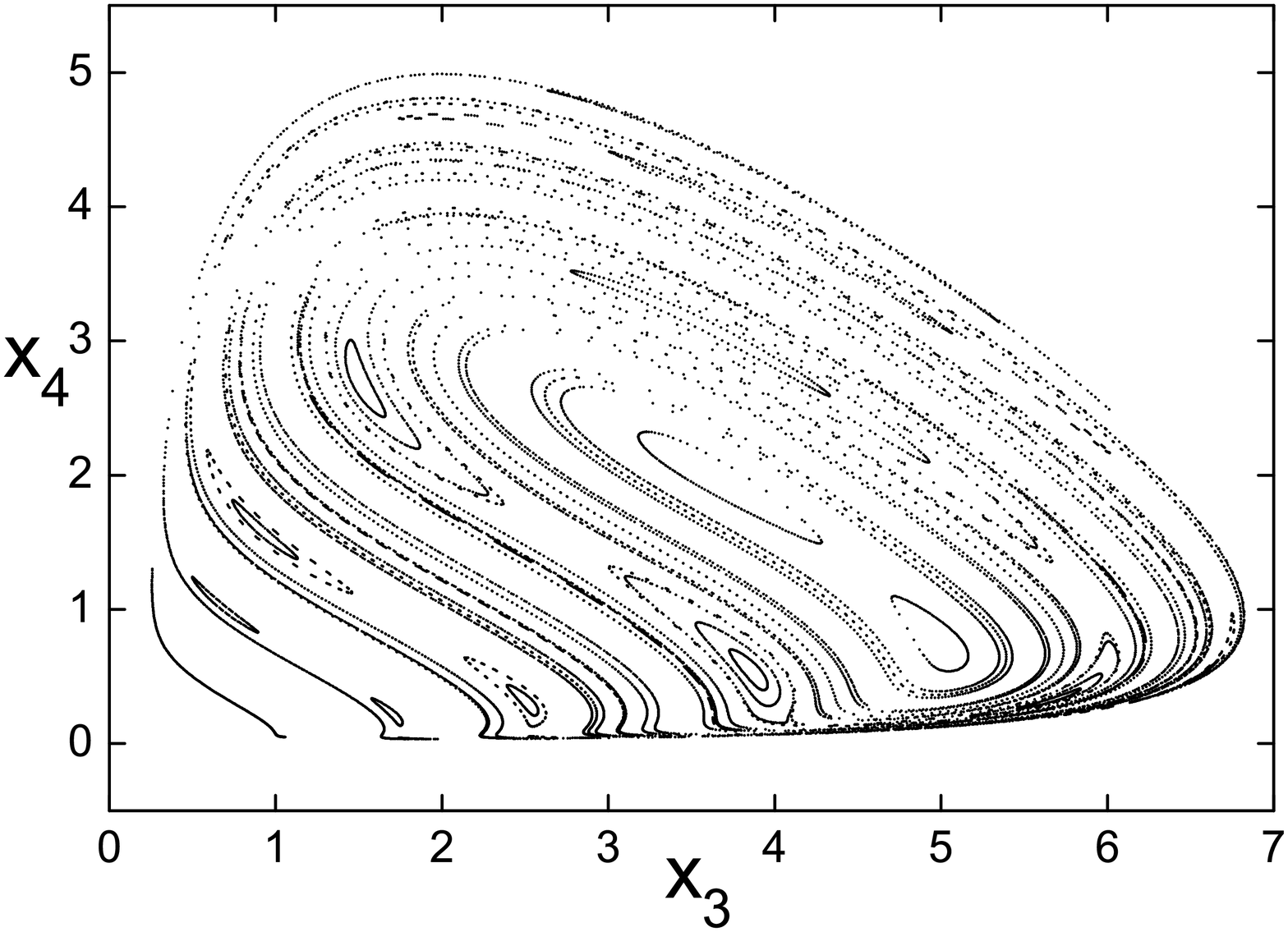}
\caption{ The Poincar{\' e} surface of section  {$x_2=1,x_1>1$} for the Lotka--Volterra system 
with $a_i=1$ and $E=6$ for various $k_i$, $i=1,2,3,4$ values: (a) $(k_1,k_2, k_3, k_4) =(-1, -2,- 1,-1)$ 
(b) $(k_1,k_2, k_3, k_4) =(-1, -2,- 1,-2)$, (c) $(k_1,k_2, k_3, k_4) =(-1,-4,-2,-3)$, (d) 
$(k_1,k_2, k_3, k_4) =(-1, -4,- 2,-1)$.   
\label{fig2}  }
\end{figure}
\section{Numerical results for $n=4$}

The purpose of this section is to explore numerically the behavior of $4$-dimensional Lotka--Volterra systems of the form \eqref{system} and investigate their integrability in 
cases that are not described in the previous sections. 
In the rest of the paper we will restrict to the case of $a_1=a_2=a_3=a_4=1$ and we 
vary only the $k_i$ values.
We perform a series of numerical calculations for the system 
\begin{equation}\label{eqss}
  \dot{x_i}=x_i \left(\sum_{j=1}^ { {4}} P_{ij}( x_j+k_j) \right)\;, \ \ i=1,\dots,4,
\end{equation}

\begin{figure}
\centering
\includegraphics[scale=0.25]{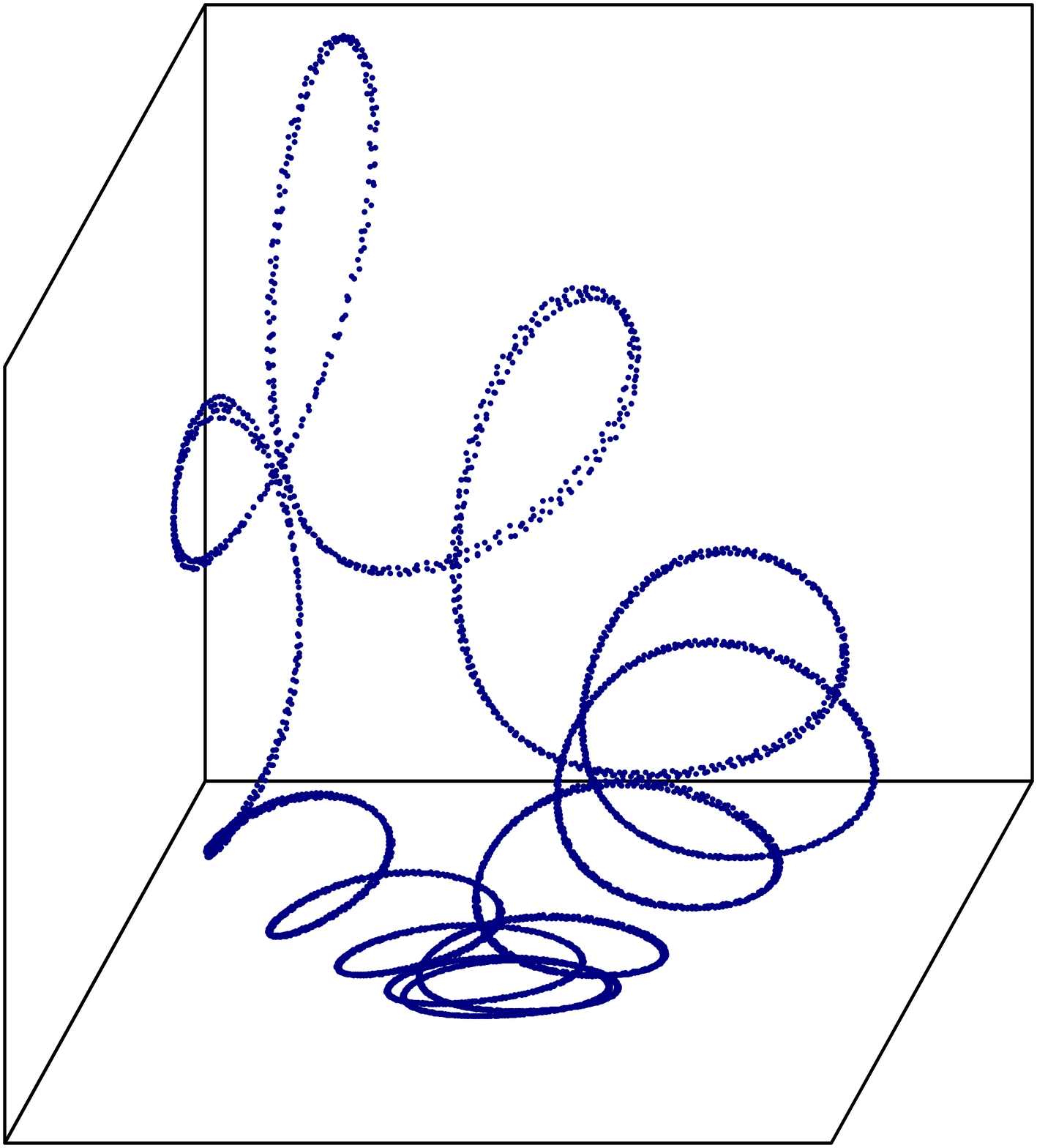}
\includegraphics[scale=0.25]{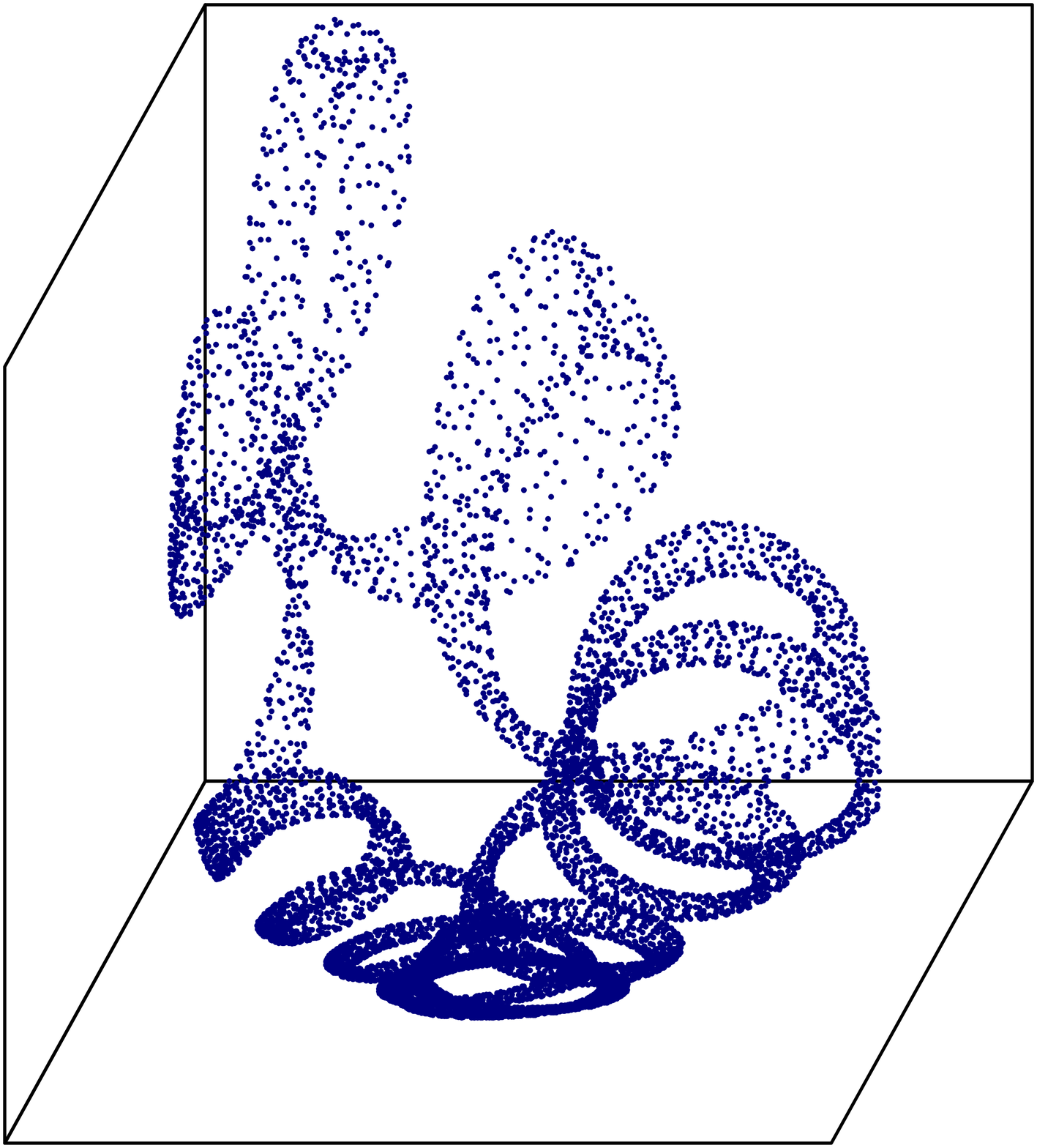} \\ 
\includegraphics[scale=0.22]{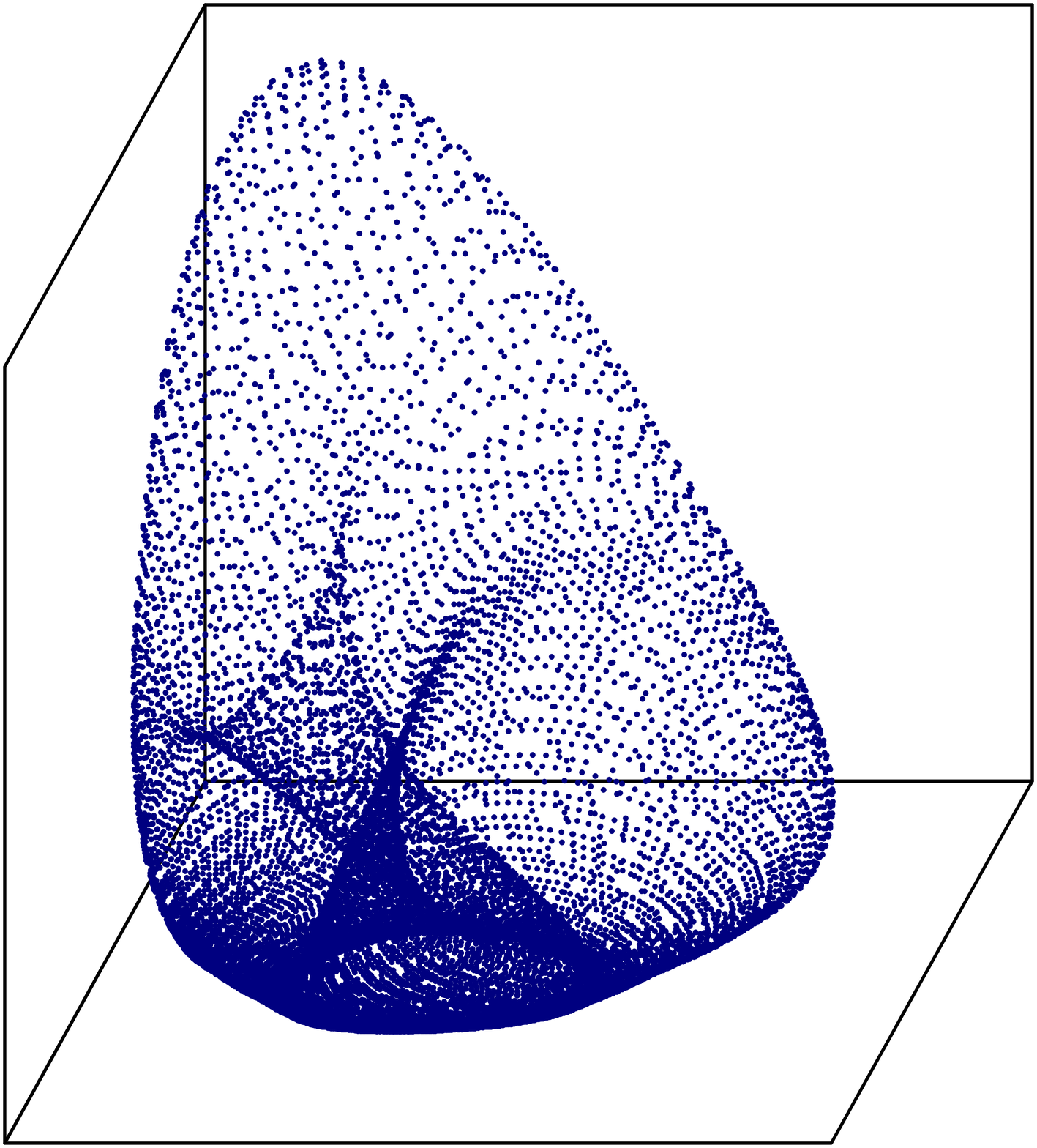}
\includegraphics[scale=0.22]{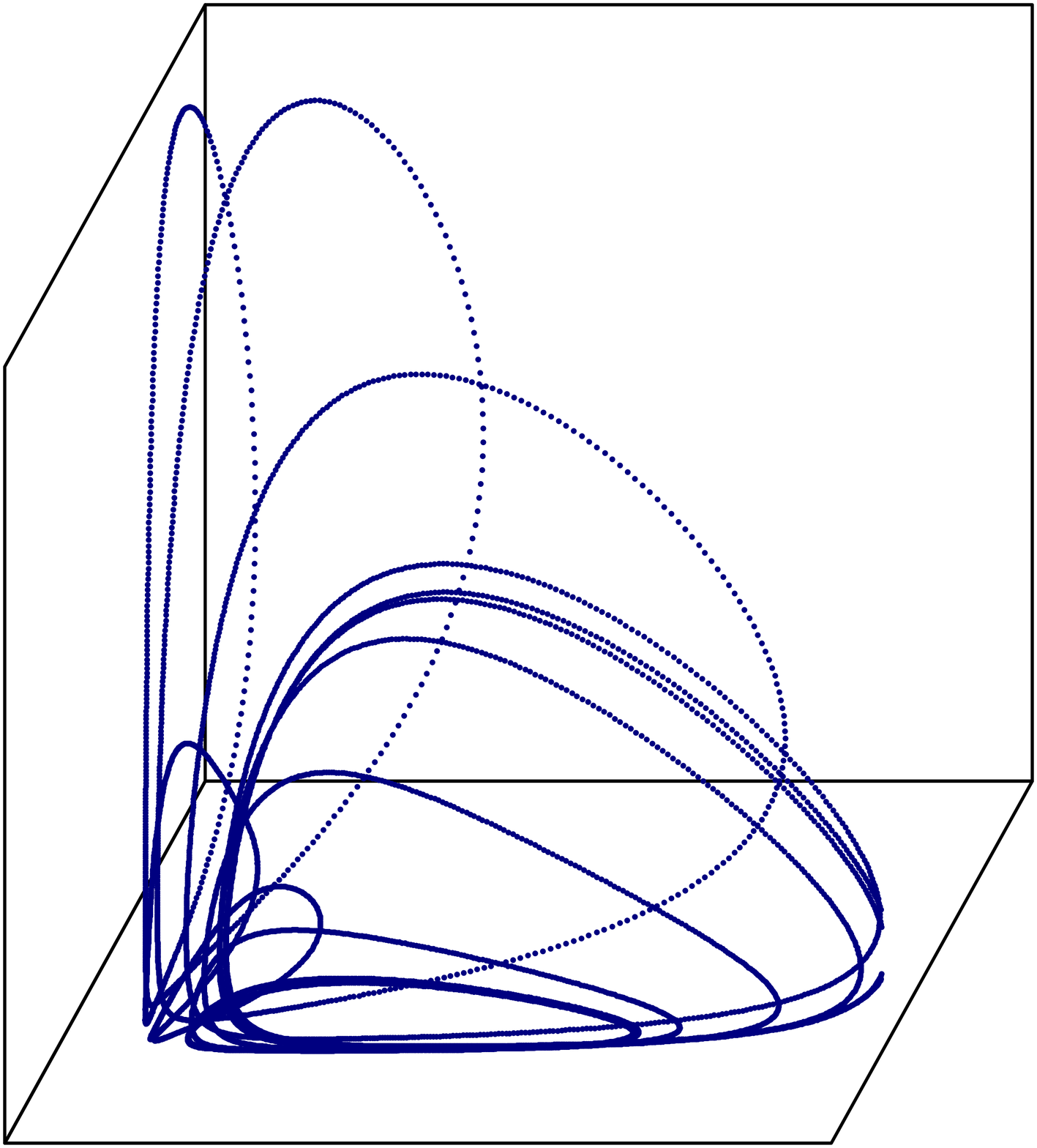}
\caption{ 3D projections on the $x_2,x_3,x_4$ plane for the system with $(k_1,k_2, k_3, k_4) =(-1, -4,- 2,-1)$
and for initial conditions:  (a) close to a fixed point of Fig.\ref{fig2}(d) ($E=6$), 
(b) on an ellipse around the fixed point ($E=6$), (c) randomly chosen from Fig.\ref{fig2}(d) ($E=6$)
and (d) randomly chosen at a higher total energy ($E=20$) exhibiting chaotic behavior.  
\label{fig3}}
\end{figure}

with different  $k_1,\ldots,k_4$ values, which are
complementary to the two integrable cases described by Theorem \ref{lastthm}.
We numerically integrate the system's equations of motion together with its 
variational equations to compute the value of the largest
Lyapunov exponent $\lambda $. The variational equations of the system  (\ref{eqss}) are
\begin{eqnarray}\label{tang}
\delta \dot{ \mathbf{x}}= [J \cdot \nabla ^2 H (\mathbf{x}(t))] \cdot  \delta \mathbf{x}~~,
\end{eqnarray}
where $\delta \mathbf{x}=(\delta x_1,\delta x_2,\delta x_3,\delta x_4)$ is a vector which evolves on the tangent space 
of the system (\ref{eqss}) and $\nabla ^2 H$ denotes the Hessian matrix of the Hamiltonian function $H$
calculated along the reference orbit $\mathbf{x}(t)$ of the system (\ref{eqss}).
In particular, we used the classical Runge--Kutta forth-order scheme with time-step $\tau =10^{-4}$
for the numerical integration of the systems  (\ref{eqss}) and (\ref{tang}), which 
conserved the energy  {$E=H(\mathbf{x})$} of the system (\ref{eqss}) with accuracy of more than 
8 significant figures during integration times of the order of a few thousand.
The indicator which controls of the relative energy error is 
$$RE=\log_{10} \left|  \frac{ E(t)-E_0 }{E_0} \right| ~~,$$
 where $E_0$ is the initial energy of the system and $E(t)$ the actual energy during the
 numerical integration.

For $k_i<0$, $i=1,\dots4$, the point  $\mathbf{x}_0=(-k_1,-k_2,-k_3,-k_4)$ is an elliptic
fixed point of the system. Furthermore, in this case $H(\mathbf{x})$ admits 
a global minimum at $\mathbf{x}_0$ and all the orbits of the system are bounded.

We start our numerical study with examples of bounded motion, which correspond to
negative values for  {all $k_i$}.
 In Fig.\ref{fig2} some Poincar{\' e} surfaces of section  $x_2=1$, $x_1>1$ are shown for different $k_i<0$ values at $E=6$.  
However, at this energy level all of them exhibit regular behavior.
These Poincar{\' e} surfaces of section are constructed for a grid of initial conditions on the 
$x_3, x_4$ plane, with $x_2=1$ and $x_1$ found numerically by Newton's method requiring 
that $H(\mathbf{x})=E$. We find a rich morphology 
consisting of periodic and quasiperiodic trajectories, island chains as well as separatrices. 
Each fixed point on the Poincar{\' e} surface represents a periodic orbit, while the ellipse-like 
curves correspond to quasiperiodic trajectories lying on tori. 
Fig.\ref{fig3}
presents different trajectories projected on the 
$x_2,x_3,x_4$ plane for the system with $(k_1,k_2, k_3, k_4) =(-1, -4,- 2,-1)$
which corresponds to Fig.\ref{fig2}(d). 
 {The first three panels of Fig.\ref{fig3} correspond to $E=6$
and the last one to $E=20$.}

\begin{figure}
\centering
\includegraphics[scale=0.25]{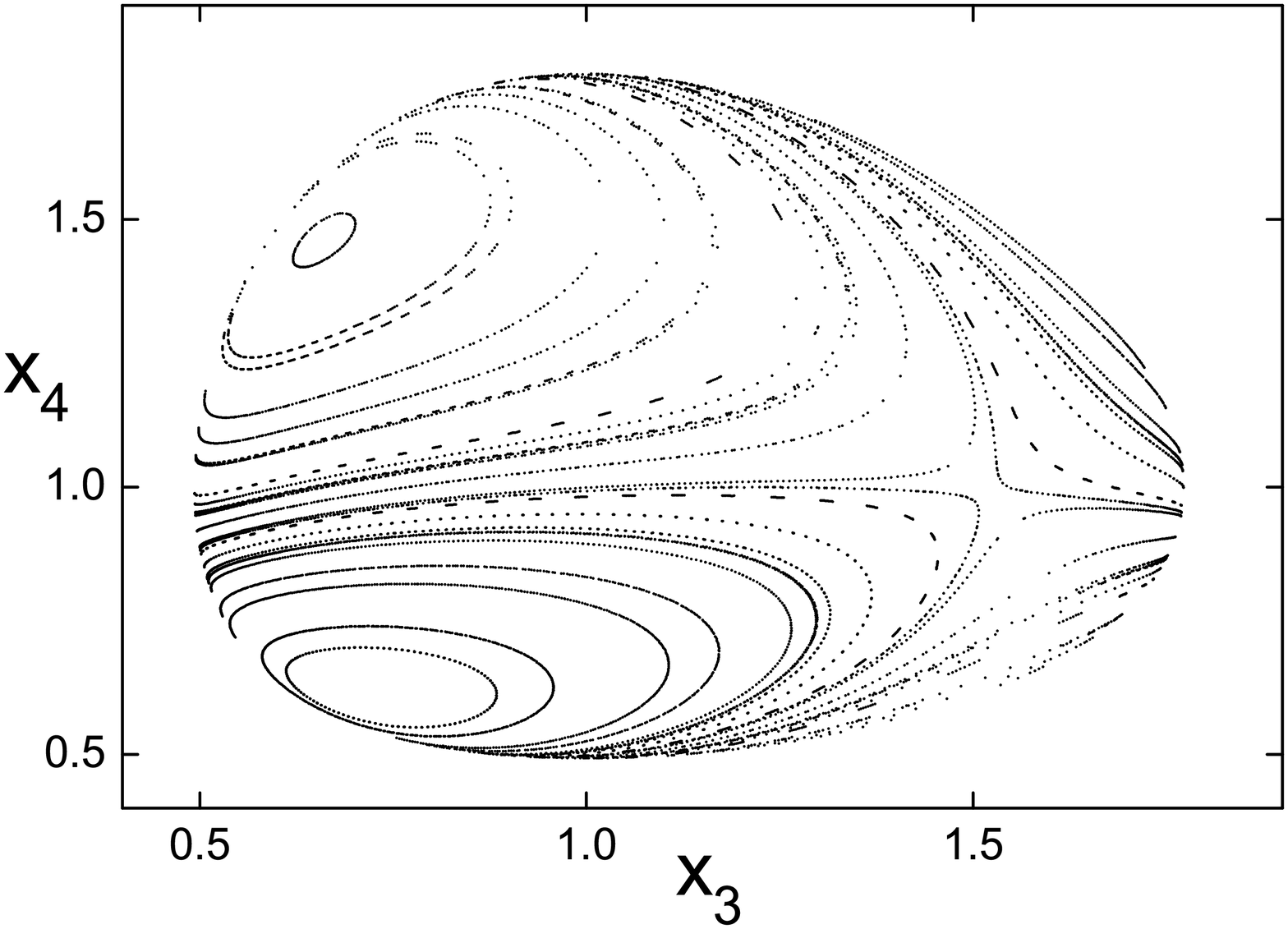}
\includegraphics[scale=0.25]{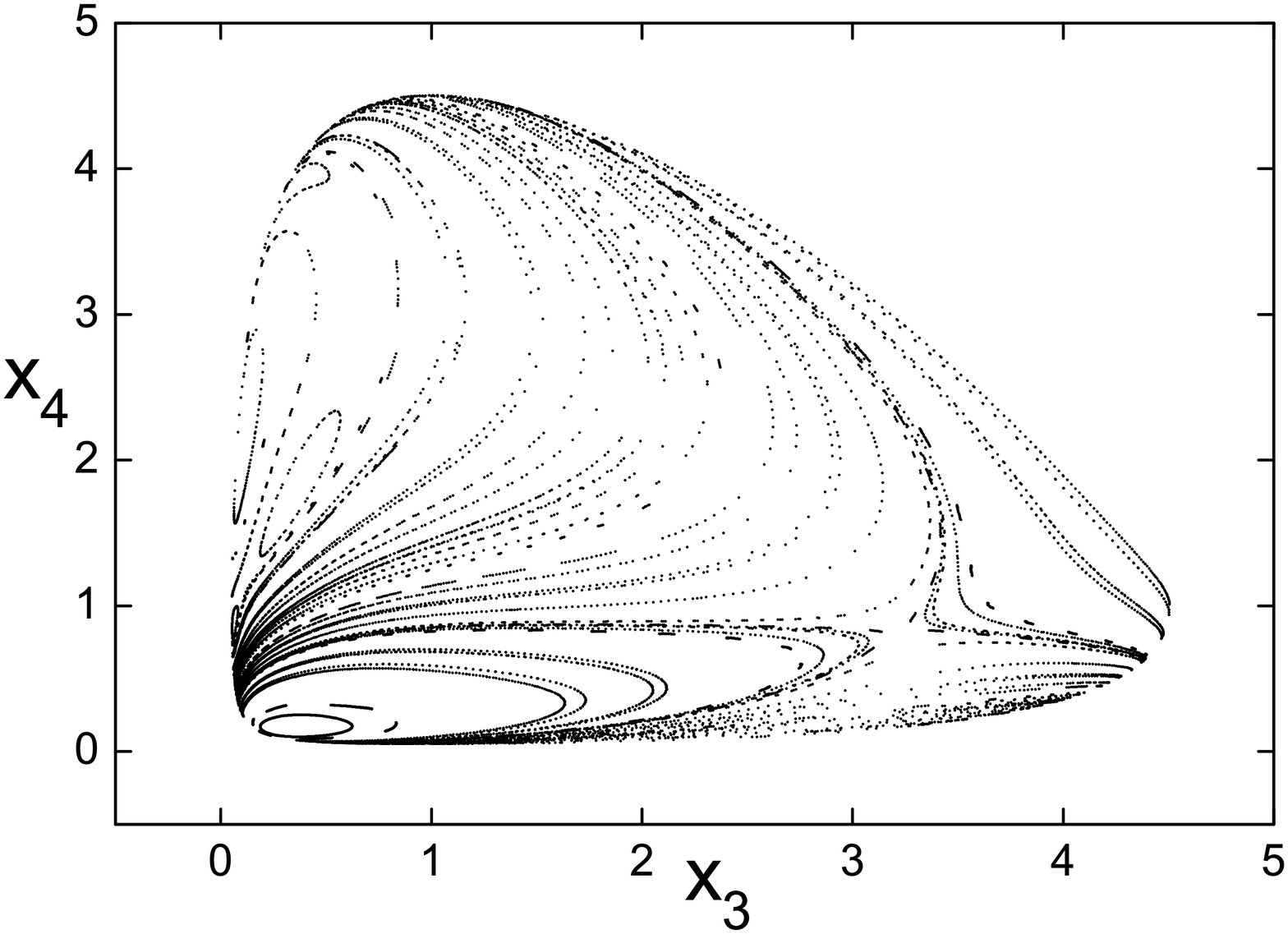} \\ 
\includegraphics[scale=0.25]{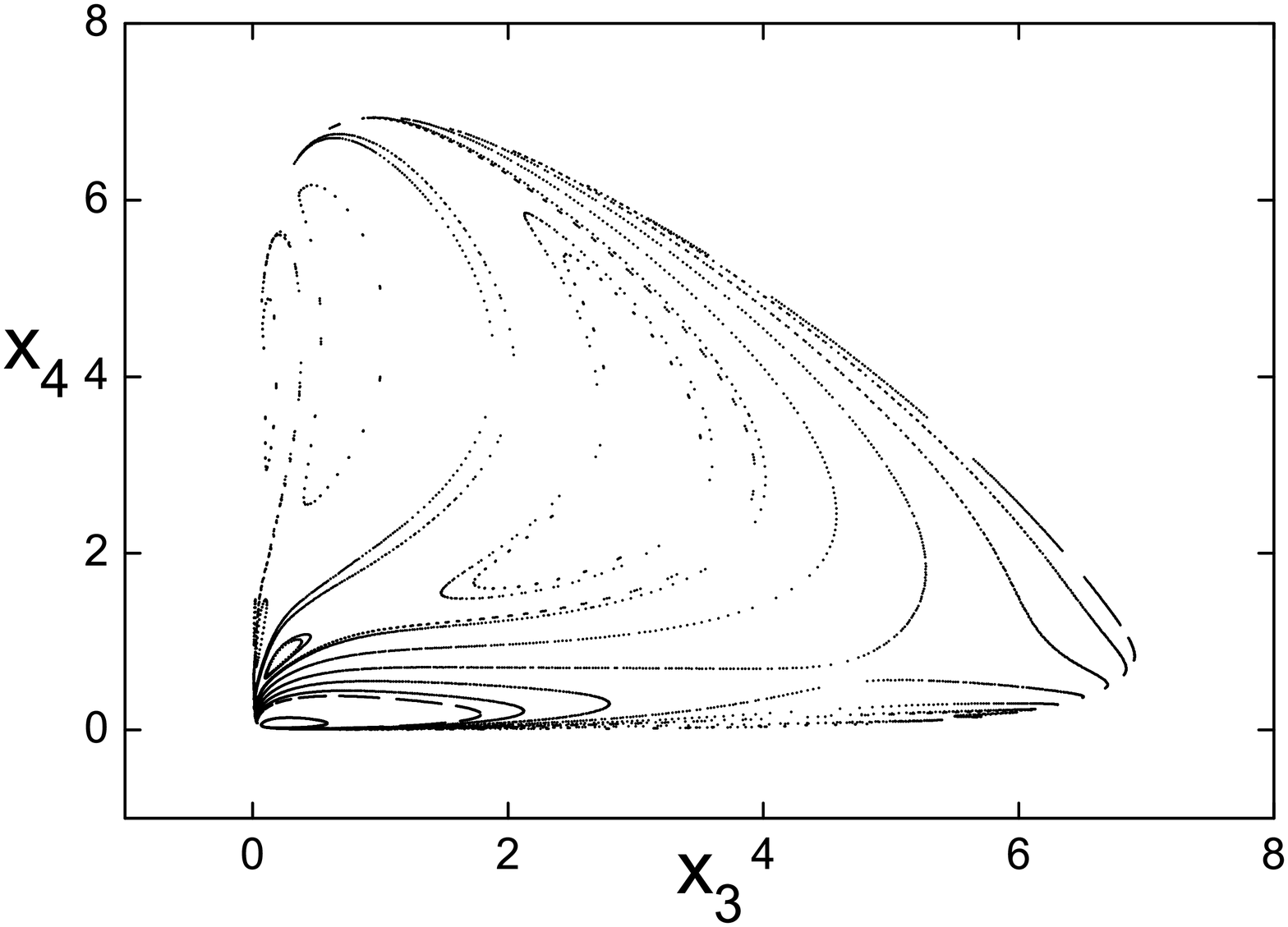}
\includegraphics[scale=0.25]{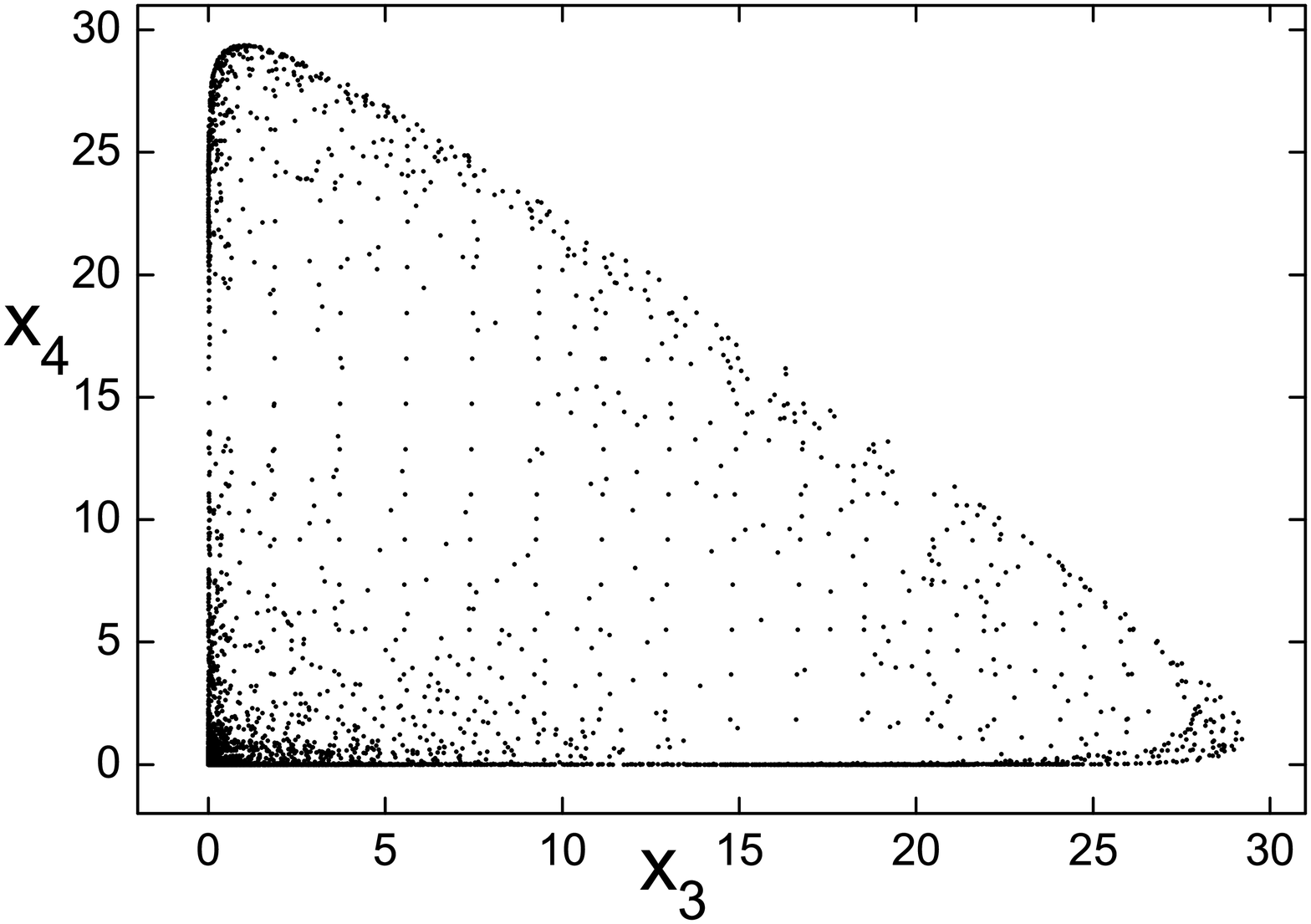}
\caption{The Poincar{\' e} surface of section  {$x_2=1, x_1>1$} for the Lotka--Volterra system 
with $a_i=1$ and $k_i=-1$, $i=1,2,3,4$ for the energies: (a) $E=4.2$, 
(b) $E=6$, (c) $E=8$, (d) $E=29$.    
\label{fig1}    }
\end{figure}

 \begin{figure}
\centering
\includegraphics[scale=0.25]{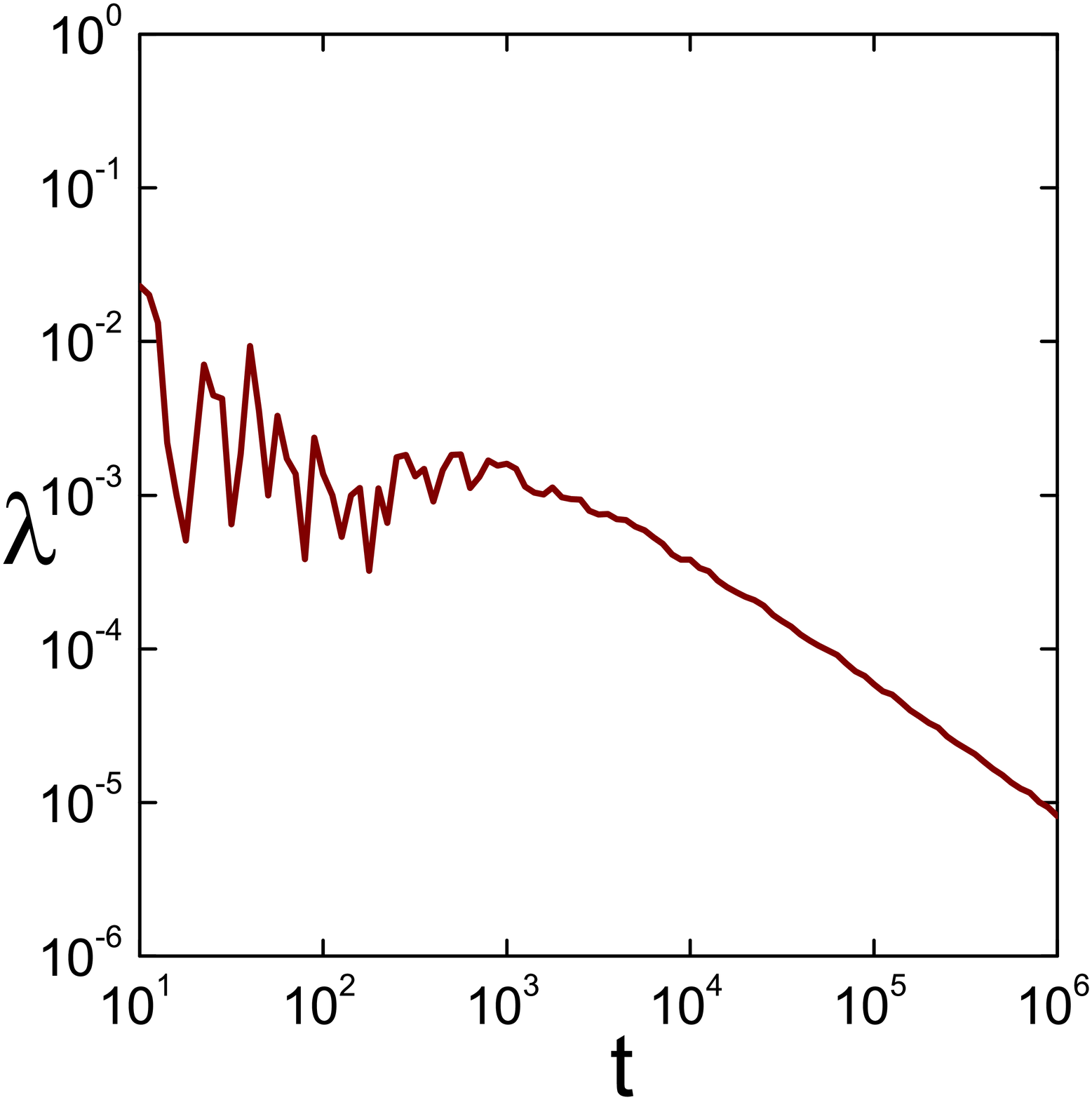}
\includegraphics[scale=0.25]{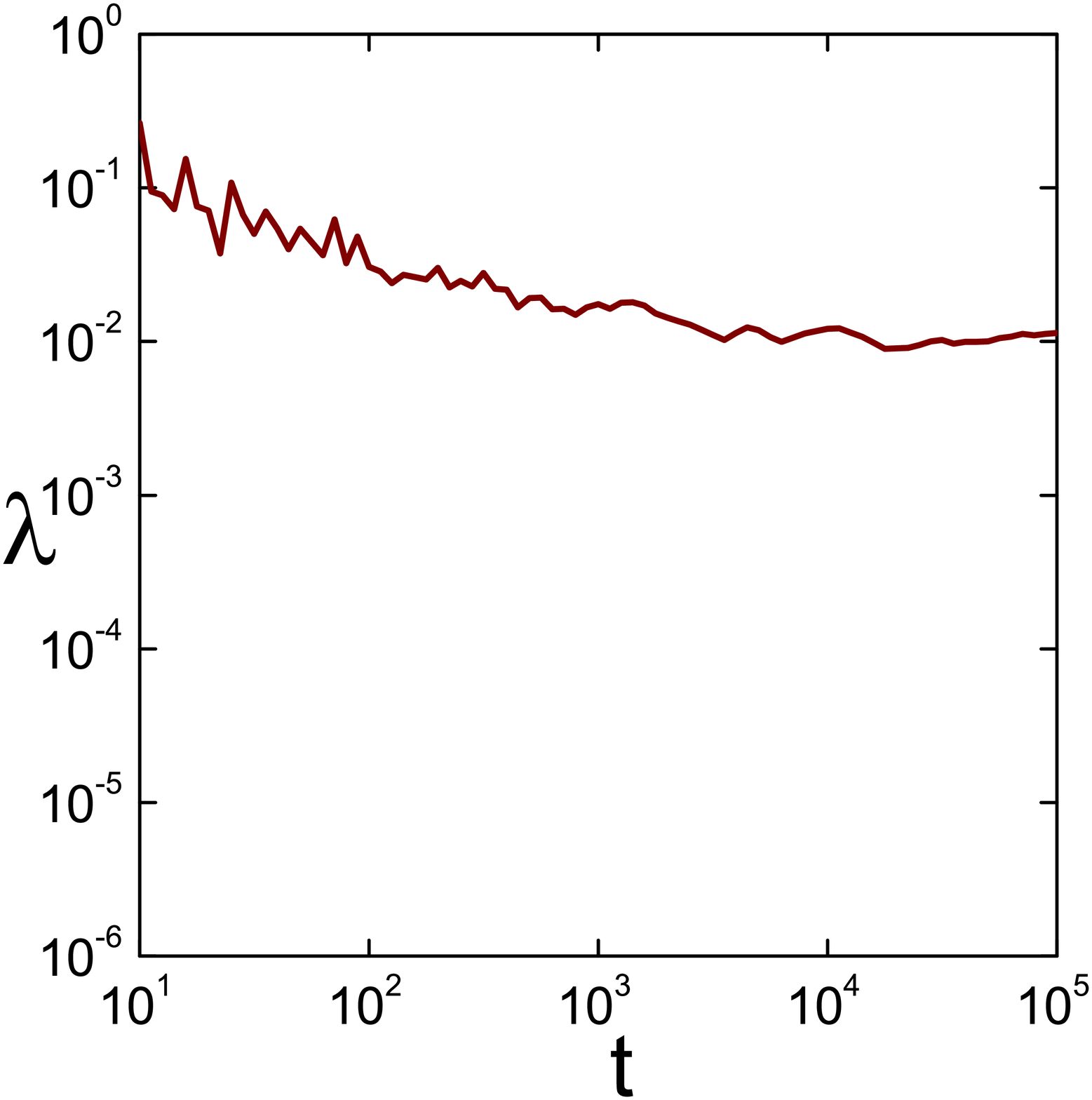}  
\caption{ The largest Lyapunov exponent $\lambda $ for the Lotka--Volterra system 
with $a_i=1$ and $k_i=-1$, $i=1,2,3,4$ for the energies: (a) $E=4.2$ and (b) $E=29$.
\label{fig1b} }
\end{figure}

\begin{figure}
\centering
\includegraphics[scale=0.27]{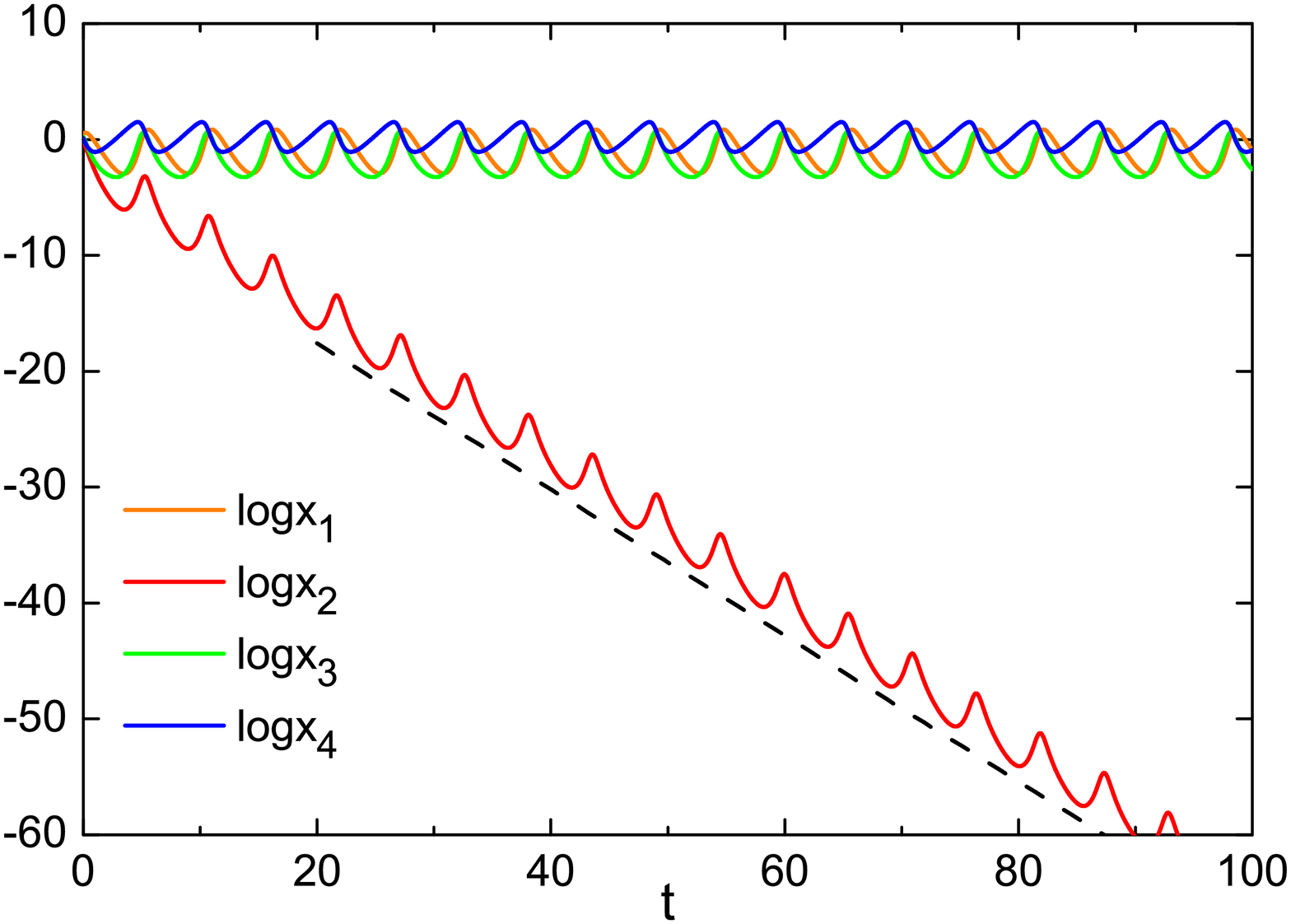}
\includegraphics[scale=0.27]{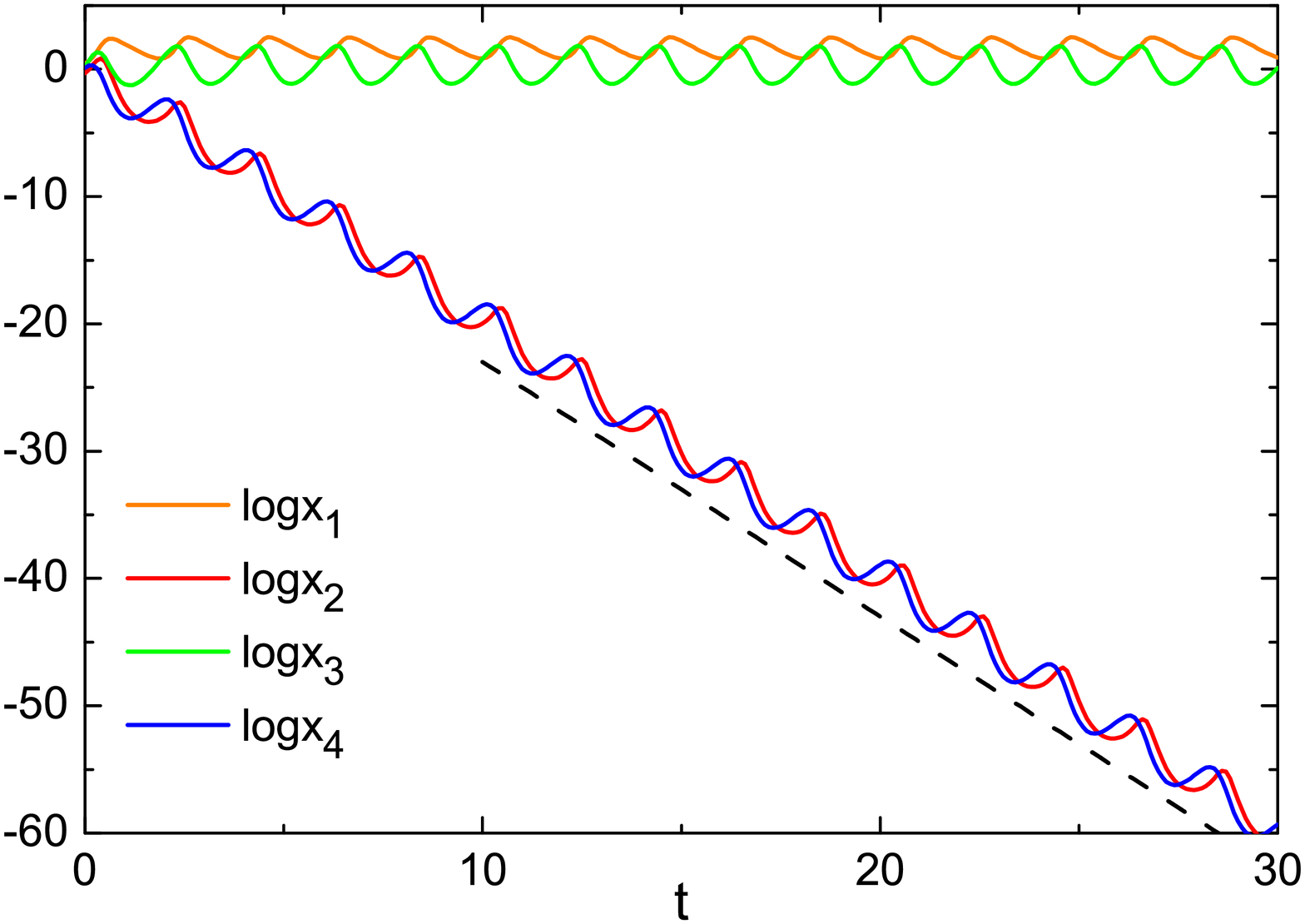} 
\caption{  The evolution in time of the phase space variables for the integrable cases: 
(a) $(k_1,k_2, k_3, k_4) =(0, 0,- 1,-1)$ and (b) $(k_1,k_2, k_3, k_4) =(-2, -2,- 2,2)$.
\label{fig4} }
\end{figure}

\begin{figure}
\centering
\includegraphics[scale=0.25]{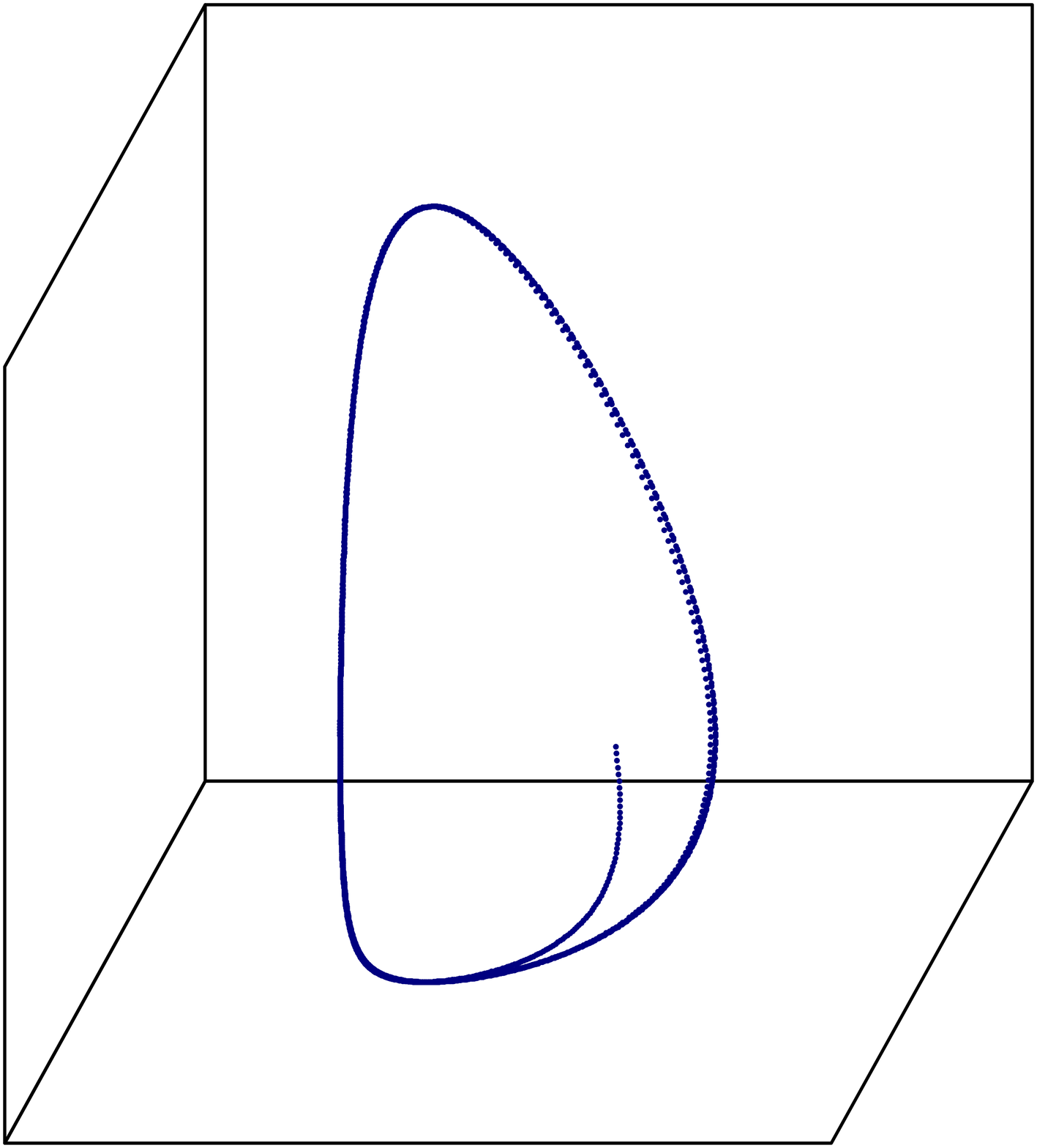}
\includegraphics[scale=0.25]{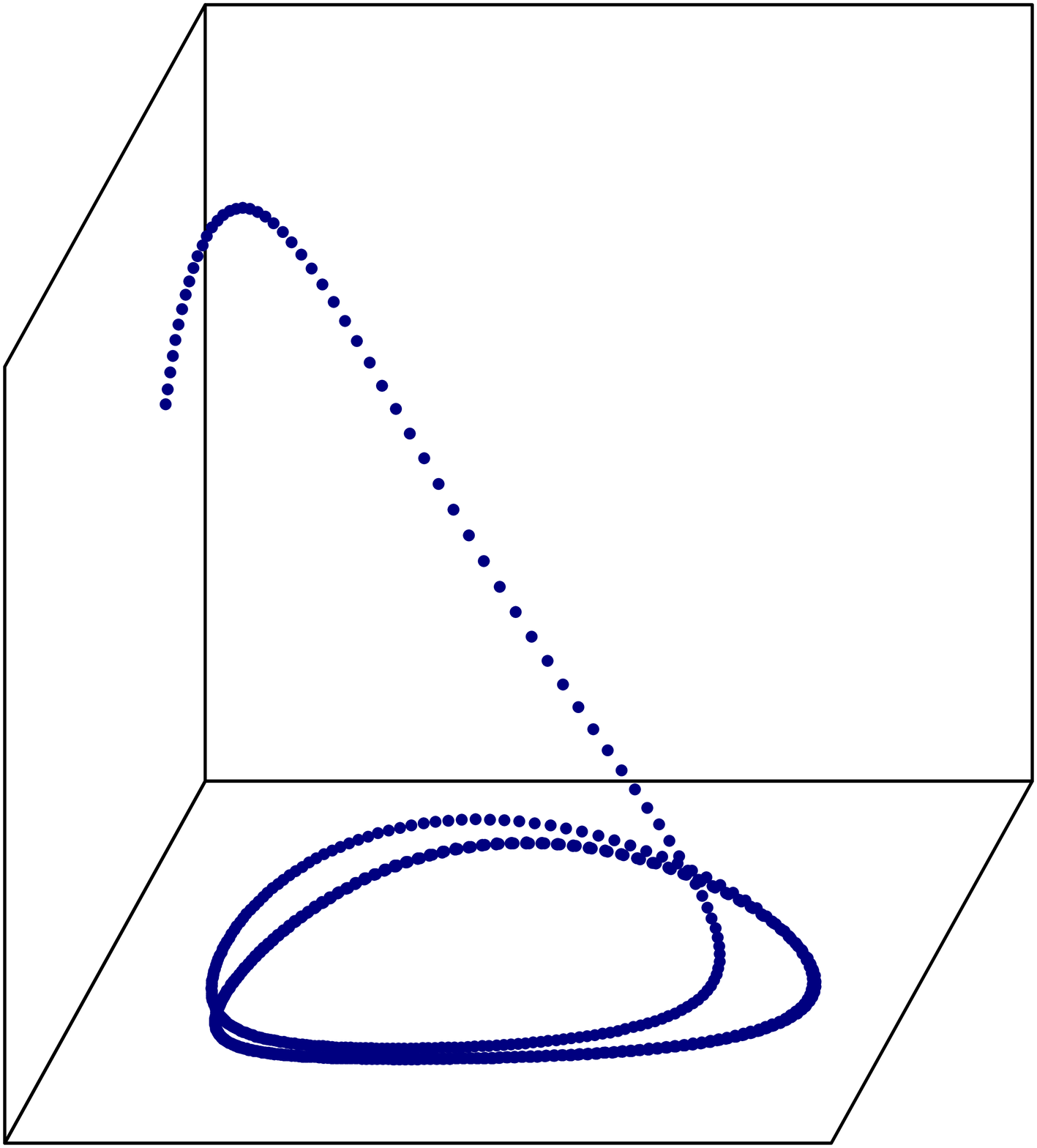} 
\caption{  The trajectories projected on the 3D plane $x_1,x_3,x_4$ plane for the integrable systems:  
(a) $(k_1,k_2, k_3, k_4) =(0, 0,- 1,-1)$ and (b) $(k_1,k_2, k_3, k_4) =(-2, -2,- 2,2)$.
\label{fig4b}   }
\end{figure}

\begin{figure}
\centering
\includegraphics[scale=0.25]{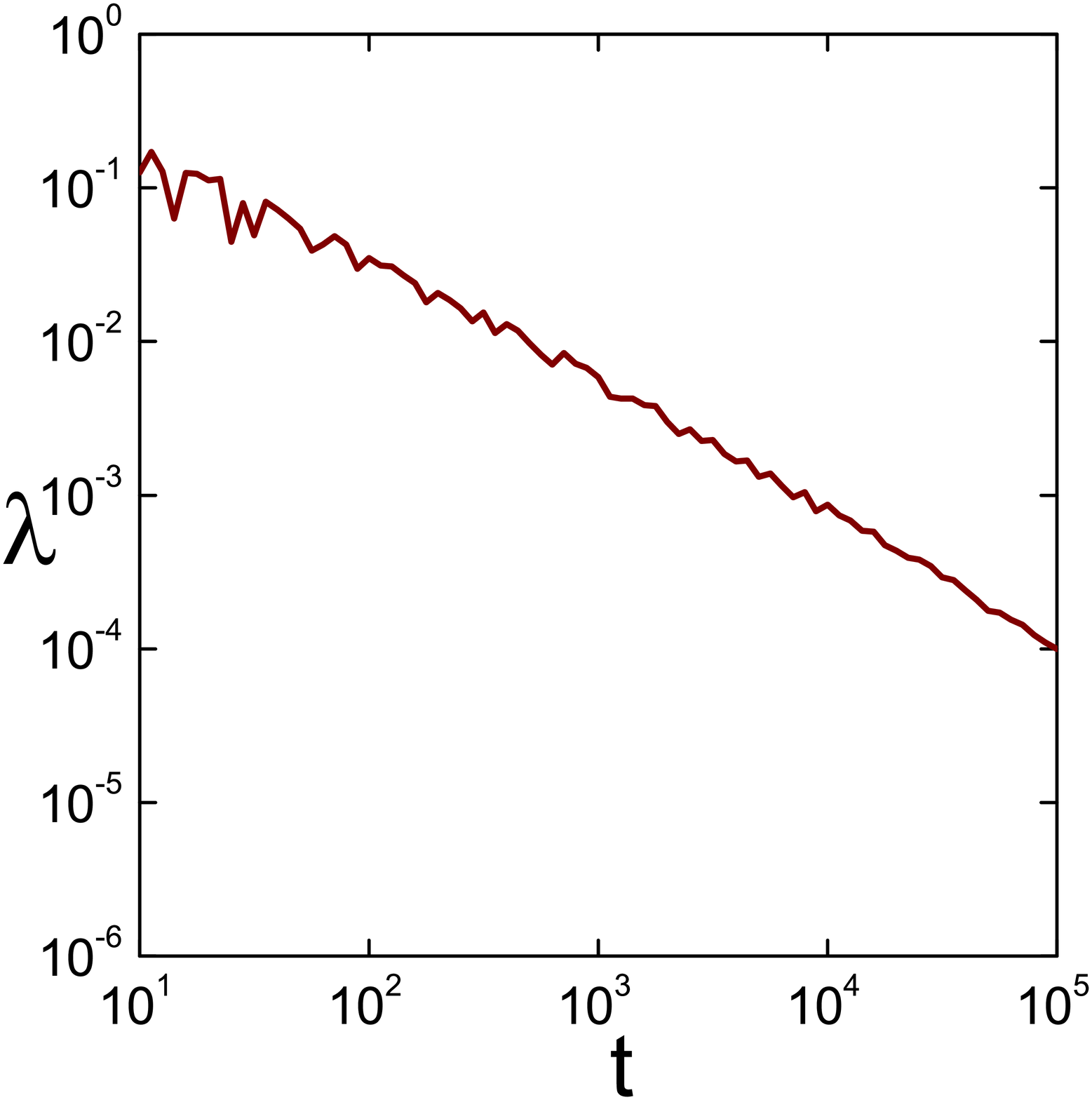}
\includegraphics[scale=0.25]{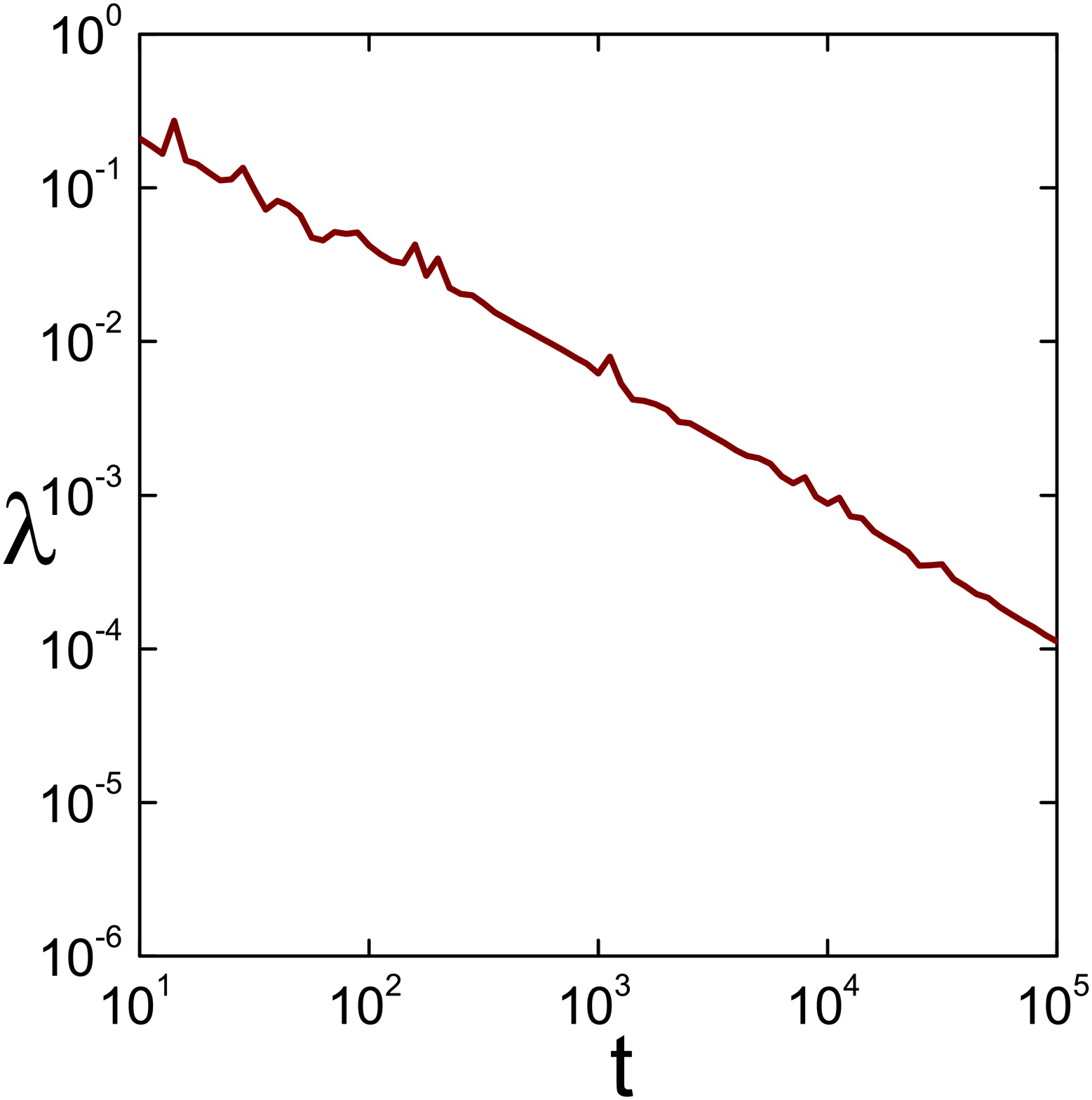}
\caption{   The largest Lyapunov exponent $\lambda $ for the system with $(k_1,k_2, k_3, k_4) =(-2, -2,- 2,2)$ 
at (b) $E=10$ and (c) $E=72$. 
\label{fig5}  }
\end{figure}
We find qualitatively similar behavior to the examples of Fig.\ref{fig2}  
for  $k_1=\dots=k_4=-1$, as Fig.\ref{fig1} indicates. 
In the Poincar{\' e} surface of section $x_2=1$, $x_1>1$ of Fig.\ref{fig1}(a),
 {which corresponds to the energy $E=4.2$,} there is no evidence of chaotic behavior.
We verify this result in Fig.\ref{fig1b}(a)  by computing the largest Lyapunov exponent $\lambda $, 
which approximately decays as $1/t$ for randomly chosen initial conditions. 
Similarly with the well-known H{\' e}non--Heiles model \cite{henon}, chaotic dynamics 
in the Lotka--Volterra system (\ref{eqss}) for $k_i<0$ (or $k_i=-1$)  emerges
for larger values of the energy. In the rest of the panels of Fig.\ref{fig1}, 
where the total energy $E$ is gradually increased, we observe a 
gradual transformation of fixed points and ellipses--like curves, while 
at energies of the order of $E=30$ (Fig.\ref{fig1}(d)) the chaotic motion is not only evident
but also prevails over the ordered motion. The largest Lyapunov exponent at this energy, 
which is plotted in Fig.\ref{fig1b}(b), 
converges to a positive value $\lambda \simeq  0.01$.
 
As we have seen in example 5.3, the only integrable cases for $n=4$, 
$ {\mathbf{a}} =(1,1,1,1)$ predicted by Theorem \ref{lastthm} are for 
$\mathbf{k}=(0,0,k_3,k_4)$,  {$k_3, k_4 \in \mathbb{R}$} or $\mathbf{k}=(k_1,k_2,0,0)$,
 {$k_1, k_2 \in \mathbb{R}$}.
We choose $(k_1,k_2, k_3, k_4) =(0, 0,- 1,-1)$,
for which the quantity $( x_1 +  x_2) x_4 /x_3$ is preserved  {besides} the Hamiltonian. 
Fig.\ref{fig4}(a) displays the evolution of the four variables $\log x_i$ in time for a random choice of 
initial conditions. 
It turns out that $x_2$ decays asymptotically to zero, approximately like $e^{-0.63t}$, while the rest variables 
$x_1, x_3, x_4$ asymptotically approach a periodic orbit, as is illustrated in Fig.\ref{fig4b}(a).
However, a similar behavior appears in other cases, not described as integrable by Theorem \ref{lastthm}.
Such an example is given in Fig.\ref{fig4}(b) and corresponds to $(k_1,k_2, k_3, k_4) =(-2, -2,- 2,2)$.
It turns our that the variables $x_2$ and $x_4$ tend asymptotically to zero as $e^{-2t}$, while
$x_1$ and $x_3$ asymptotically converge to the periodic orbit shown in Fig.\ref{fig4b}(b). 
Furthermore, we carefully examine the largest Lyapunov exponent $\lambda $ in Fig.\ref{fig5} 
for constantly increasing energies and we find that $\lambda \propto 1/t$, even when $E=72$,
which strongly indicates that the system is integrable in this case too.

 Similarly to the case  $(k_1,k_2, k_3, k_4) =(-2, -2,- 2,2)$ we find other cases which display 
integrable behavior, manifested by asymptotically vanishing  Lyapunov exponents.
Few of the cases that we checked are listed in the following table \begin{center}
\begin{tabular}{|c|c|c|c|}
\hline
$k_1$ & $k_2$ & $k_3$  & $k_4$  \\ \hline
 ~1 & ~1 & -1 & -1 \\ 
~1 & -1 & ~1 & -1 \\
 ~1 & -1 & -1 & ~1 \\
-1 & -1 & ~1 & -1 \\
-1 & ~1 & -1 & -1 \\
~1 & -1 & -1 & -1 \\ \hline
\end{tabular}
 \end{center}
 
 Finally, based on our numerical findings 
and observations, we conjecture that chaotic motion for the $n=4$ system (\ref{system2}) 
emerges when $a_i>0$, $k_i<0$ and $a_i<0$, $k_i>0$.

\section{Conclusions}
We presented a new class of Hamiltonian parametric Lotka--Volterra systems with non-zero linear terms and we proved that, for particular choices of parameters, Liouville integrability and superintegrability is established.  
Different choices of parameters when $n=4$, not described by the theory, were studied numerically, 
showing that  both chaotic and new integrable cases appear. Concerning these new cases with integrable behavior,
 {we aim to study them in detail in order to detect additional integrals and complete our investigation by including all the odd dimensional cases too.}
  
In the present work we restricted our analysis to the even-dimensional case; 
however, a similar approach can be considered for odd dimensions. 
 Finally,  we believe that a similar approach can be considered for integrable Lotka--Volterra systems with different community matrices, or integrable deformations of them such as the systems presented in \cite{DEKV,evri,evri2},  by inserting parametric linear terms in the corresponding vector fields.

\section*{Acknowledgements} 
HC is supported by the State Scholarship Foundation (IKY) 
operational Program: `Education and Lifelong Learning--Supporting Postdoctoral Researchers'
 2014-2020, and is co--financed by the European Union and Greek national funds; she is also grateful to SMSAS, Kent for hosting her as a visitor. 
ANWH is supported by Fellowship EP/M004333/1 from the Engineering $\&$ Physical Sciences 
Research Council, UK, and is grateful to the School of Mathematics $\&$ Statistics, UNSW for hosting him as a 
Visiting Professorial Fellow with funding from  
the Distinguished Researcher Visitor scheme; he also thanks Prof. Wolfgang Schief for additional financial support in 2019.  
TEK would like to thank Prof. Reinout Quispel, Dr  Peter Van Der Kamp and Dr Charalambos Evripidou for their hospitality at La Trobe University, and for their useful 
comments on this topic.

\appendix
\section{ {Comments and examples on the odd dimensional cases}} \label{sectionodd}
As it is stated in Section \ref{secHamForm}, in the odd dimensional cases the described Hamiltonian formalism, i.e. the log-canonical Poisson structure \eqref{poisson} along with the Hamiltonian $H(\mathbf{x})=\sum_{i=1}^n (a_i x_i+k_i\log x_i),$ 
is not sufficient to include all the cases of  vector fields \eqref{system} for arbitrary $r_i$, since matrix \eqref{prtder} is not invertible. 
 Therefore, in this setting we can only restrict to the cases with $\mathbf{r}=P\mathbf{k}$, that is systems of the form \eqref{system2}. 
 For $n=3$, the integrability of  \eqref{system2} follows directly from its Hamiltonian formalism and the existence of the Casimir function $\frac{x_1 x_3}{ x_2}$. 
 More interesting integrable cases emerge for odd $n>3$, by considering the corresponding integrals 
of the $\mathbf{k}=0$ case as they appear in \cite{KQV} and the corresponding permutation symmetry of the system. We will illustrate this in the following example for $n=5$. 

Let us consider the system
\begin{equation} \label{systemn5}
  \dot{x_i}=x_i \left(\sum_{j=1}^5 P_{ij}( a_j x_j+k_j) \right)\;,  \ \ i=1,\dots,5,
\end{equation}
with parameters $\mathbf{a}=(a_1,\dots,a_5),\mathbf{k}=(k_1,\dots,k_5) \in \mathbb{R}^5$. According to \cite{KQV}, for $\mathbf{k=0}$ this system 
admits the first integral
\begin{equation*}
F=\frac{x_5}{x_4}(a_1 x_1+a_2 x_2+a_3 x_3).
\end{equation*}
We compute its Poisson bracket with the Hamiltonian $H=\sum_{i=1}^n (a_i x_i+k_i\log x_i)$  of  
\eqref{systemn5} to get
\begin{equation*}
\{F,H\}=\frac{x_5}{x_4}\left(a_1(k_2+k_3)x_1+a_2(k_3-k_1)x_2- a_3(k_1+k_2) \right).
\end{equation*}
Hence, $F$ is a first integral of \eqref{systemn5} if and only if 
\begin{equation} \label{Fn5}
a_1(k_2+k_3)=a_2(k_3-k_1)=a_3(k_1+k_2)=0.
\end{equation} 
If the parameters $\mathbf{a},\mathbf{k}$ satisfy \eqref{Fn5}, then the integral $F$ in addition to the Casimir function $C=\frac{x_1 x_3 x_5}{ x_2 x_4}$ ensures 
the  
complete integrability of the system. 
Furthermore, the invariance of \eqref{systemn5} under the transformation $x_i \mapsto x_{6-i}$, $a_i \mapsto -a_{6-i}$, $k_i \mapsto -k_{6-i}$, 
implies that 
\begin{equation*}
\tilde{F}=\frac{x_1}{x_2}(a_5 x_5+a_4 x_4+a_3 x_3).
\end{equation*}
is a first integral of \eqref{systemn5} if and only if 
\begin{equation} \label{tildeFn5}
a_5(k_4+k_3)=a_4(k_3-k_5)=a_3(k_5+k_4)=0. 
\end{equation}
So we conclude that system \eqref{systemn5} is integrable if the parameters $\mathbf{a},\mathbf{k}$ satisfy \eqref{Fn5} or \eqref{tildeFn5}. 

For example, in the case of $\mathbf{a} \neq 0$, system \eqref{systemn5} is integrable if $k_3=-k_2=k_1$ or $k_3=-k_4=k_5$, while the case of 
$k_5=-k_4=k_3=-k_2=k_1$ which leads to superintegrability is equivalent to the $\mathbf{k}=0$ case.


\begin{thebibliography}{3}
\bibitem{balle} {\'A}. Ballesteros, A. Blasco  and F. Musso, Integrable deformations of Lotka--Volterra systems,
{\it Phys. Lett. A}, {\bf 375}(38) (2011), 3370--3374. 


\bibitem{bog87} O. I. Bogoyavlenskij, Some constructions of integrable dynamical systems,
 {\it Izv. Akad. Nauk SSSR Ser. Mat.}, {\bf  51}(4) (1987), 737--766.


\bibitem{bog} O. I. Bogoyavlenskij, Integrable Lotka-Volterra systems, 
{\it Regul. Chaotic Dyn.}, {\bf 13}(6) (2008), 543--556. 

\bibitem{bountis} T. Bountis and P. Vanhaecke, {Lotka--Volterra systems satisfying a strong Painlev{\'e} property}, 
 {\it Phys. Lett. A.}, {\bf 380}(47) (2016), 3977--3982. 


 \bibitem{chara} S. A. Charalambides, P. A. Damianou, and C. A. Evripidou, {On generalized Volterra systems}, 
{\it J. Geom. Phys.}, {\bf 87} (2015),  86--105.

\bibitem{DEKV}
P. A. Damianou, C. A. Evripidou, P. Kassotakis and P. Vanhaecke, Integrable Reductions of the Bogoyavlenskij--Itoh Lotka-Volterra Systems, {\em J. Math. Phys.}, {\bf 58} (2017), 032704.

\bibitem{evri} C. A. Evripidou, P. Kassotakis and P. Vanhaecke, 
Integrable deformations of the Bogoyavlenskij--Itoh Lotka--Volterra systems,
{\it Regul. Chaot. Dyn.}, {\bf 22} (2017), 721--739.

 {\bibitem{evri2} C. A. Evripidou, P. Kassotakis and P. Vanhaecke,
Integrable reductions of the dressing chain, {\it arXiv:1903.02876}.}

\bibitem{henon} M. H{\'e}non and C. Heiles, 
The applicability of the third integral of motion: Some numerical experiments,
 {\it Astron. J.}, {\bf 69}(1) (1964), 73--79.


\bibitem{hern} B. Hern{\'a}ndez--Bermejo and V. Fair{\'e}n, 
Hamiltonian structure and Darboux theorem for families of generalized Lotka--Volterra systems,
{\it J. Math. Phys.}, {\bf 39}(11) (1998), 6162--6174. 


\bibitem{itoh87} Y. Itoh, Integrals of a Lotka--Volterra system of odd number of variables,
{\it Progr.  Theoret. Phys.}, {\bf 78}(3) (1987), 507--510.

\bibitem{itoh09} Y. Itoh, A combinatorial method for the vanishing of the Poisson brackets of an integrable 
Lotka--Volterra system,
{\it J. Phys. A} {\bf 42}(2) (2009), 025201.

\bibitem{KKQTV}
P.~H. van~der Kamp, T.~E. Kouloukas, G.~R.~W. Quispel, D.~T. Tran and P.~Vanhaecke, 
\newblock Integrable and superintegrable systems associated with multi-sums of products, 
\newblock {\em Proc. R. Soc. Lond. Ser. A Math. Phys. Eng. Sci.},
  470 (2014), 20140481.

  
\bibitem{KQV}
T.~E. Kouloukas, G.~R.~W. Quispel and P. Vanhaecke,
Liouville integrability and superintegrability of a generalized Lotka--Volterra system and its Kahan discretization, 
\newblock {\em J. Phys. A: Math. Theor.} {\bf 49} (2016), 225201.
  
  
\bibitem{lotka} A. J. Lotka. Analytical theory of biological populations. 
The Plenum Series on Demographic Methods and Population Analysis. 
Plenum Press, New York, 1998. Translated from the 1939 French edition and with an 
introduction by David P. Smith and H{\'e}l{\`e}ne Rossert. 

\bibitem{ragni} O. Ragnisco and M. Scalia, The Volterra Integrable case, {\it{arXiv:1903.03595}}.

\bibitem{suris99} 
Y. B. Suris and O. Ragnisco, 
What is the relativistic Volterra lattice?, 
{\it Comm. Math. Phys.}, {\bf 200}(2) (1999), 445--485.  
  


\bibitem{vano}  J. A. Vano, J. C. Wildenberg, M. B. Anderson, J. K. Noel and J. C. Sprott, 
Chaos in low-dimensional Lotka--Volterra models of competition,
{\it Nonlinearity} {\bf 19} (2006), 2391--2404.

\bibitem{veselov} A. P. Veselov and A. V. Pensko{\"i}, 
On algebro--geometric Poisson brackets for the Volterra lattice,
{\it Regul. Chaotic Dyn.} {\bf 3}(2) (1998), 3--9.

\bibitem{volterra} V. Volterra, 
Le{\c c}ons sur la th{\'e}orie math{\'e}matique de la lutte pour la vie.Les Grands Classiques Gauthier--Villars, 
Paris, 1931. Reprint 1990. 

\end{thebibliography}
\end{document}